\documentclass[sigconf]{acmart}
\usepackage[utf8]{inputenc}
\usepackage{graphicx}
\usepackage{amsthm, amsmath, amssymb, amsfonts}
\usepackage{subfloat}
\usepackage{subfig}
\usepackage{caption}
\usepackage{booktabs} % For formal tables

\usepackage{multirow}
\captionsetup{compatibility=false}

%\setcopyright{rightsretained}

\setcopyright{none}
\settopmatter{printacmref=false} % Removes citation information below abstract
\renewcommand\footnotetextcopyrightpermission[1]{} % removes footnote with conference information in first column
\pagestyle{plain} % removes running headers

\newtheorem{definition}{Definition}
\newtheorem{problem}{Problem}
\newtheorem*{approach}{Approach}
\newtheorem*{remark}{Remark}

\newcommand\blfootnote[1]{%
  \begingroup
  \renewcommand\thefootnote{}\footnote{#1}%
  \addtocounter{footnote}{-1}%
  \endgroup
}

\begin{document}

\title{Verisig: verifying safety properties of hybrid systems with neural network controllers}
%\subtitle{A Hybrid-System Approach}
\author{Radoslav Ivanov}
\affiliation{%
  \institution{Department of Computer and Information Science\\ University of Pennsylvania}
  \streetaddress{3330 Walnut Street}
  \city{Philadelphia}
  \state{Pennsylvania}
}
\email{rivanov@seas.upenn.edu}

\author{James Weimer}
\affiliation{%
  \institution{Department of Computer and Information Science\\ University of Pennsylvania}
  \streetaddress{3330 Walnut Street}
  \city{Philadelphia}
  \state{Pennsylvania}
}
\email{weimerj@seas.upenn.edu}

\author{Rajeev Alur}
\affiliation{%
  \institution{Department of Computer and Information Science\\ University of Pennsylvania}
  \streetaddress{3330 Walnut Street}
  \city{Philadelphia}
  \state{Pennsylvania}
}
\email{alur@cis.upenn.edu}

\author{George J. Pappas}
\affiliation{%
  \institution{Department of Electrical and Systems Engineering\\ University of Pennsylvania}
  \streetaddress{3330 Walnut Street}
  \city{Philadelphia}
  \state{Pennsylvania}
}
\email{pappasg@seas.upenn.edu}

\author{Insup Lee}
\affiliation{%
  \institution{Department of Computer and Information Science\\ University of Pennsylvania}
  \streetaddress{3330 Walnut Street}
  \city{Philadelphia}
  \state{Pennsylvania}
}
\email{lee@cis.upenn.edu}

\begin{abstract}
This paper presents Verisig, a hybrid system approach to verifying
safety properties of closed-loop systems using neural networks as
controllers. Although techniques exist for verifying input/output
properties of the neural network itself, these methods cannot be used
to verify properties of the closed-loop system (since they work with
piecewise-linear constraints that do not capture non-linear plant
dynamics). To overcome this challenge, we focus on sigmoid-based
networks and exploit the fact that the sigmoid is the solution to a
quadratic differential equation, which allows us to transform the
neural network into an equivalent hybrid system. By composing the
network's hybrid system with the plant's, we transform the problem
into a hybrid system verification problem which can be solved using
state-of-the-art reachability tools. We show that reachability is
decidable for networks with one hidden layer and decidable for general
networks if Schanuel's conjecture is true. We evaluate the
applicability and scalability of Verisig in two case studies, one from
reinforcement learning and one in which the neural network is used to
approximate a model predictive controller.\blfootnote{This material is
  based upon work supported by the Air Force Research Laboratory
  (AFRL) and the Defense Advanced Research Projects Agency (DARPA)
  under Contract No. FA8750-18-C-0090. Any opinions, findings and
  conclusions or recommendations expressed in this material are those
  of the author(s) and do not necessarily reflect the views of the
  AFRL, DARPA, the Department of Defense, or the United States
  Government. This work was supported in part by NSF grant
  CNS-1837244. This research was supported in part by ONR
  N000141712012.}
\end{abstract}

\maketitle

%!TEX root = main.tex

\section{Introduction}
\label{sec:intro}
In recent years, deep neural networks (DNNs) have been successfully
applied to multiple challenging tasks such as image
processing~\cite{taigman14}, reinforcement learning~\cite{mnih15},
learning model predictive controllers (MPCs)~\cite{royo18}, natural
language translation~\cite{sutskever14}, and games such as
Go~\cite{silver16}. These promising results have inspired system
developers to use DNNs in
%increasingly complex and \JW{removed -- increasingly complex isn't
%really the motivation here.}
safety-critical Cyber-Physical Systems (CPS) such as autonomous
vehicles~\cite{uber_report} and air traffic collision avoidance
systems~\cite{julian16}. At the same time, several recent incidents
(e.g., Tesla~\cite{tesla_report} and Uber~\cite{uber_report}
autonomous driving crashes) have underscored the need to better
understand DNNs and verify safety properties about CPS using such
networks.

The traditional way of assessing a learning algorithm's performance is
through bounding the expected generalization error (EGE) of a trained
classifier, i.e., the expected difference between the classifier's
error on training versus test examples~\cite{mohri12}. The EGE can be
usually bounded (e.g., in a probably approximately correct
sense~\cite{kearns94}) by assuming that a large enough training set
satisfying
%a certain \JW{I changed this sentence a bit ... changed a "certain"
%to "some", changed requirements to assumptions, and assumptions
%plural}
some statistical assumptions (e.g., independent and identically
distributed examples) is available. However, it is difficult to obtain
tight EGE bounds for DNNs due to the high-dimensional input and
parameter settings DNNs are used in (e.g., thousands of inputs, such
as pixels in an image, and millions of
parameters)~\cite{zhang16}. Thus, it remains a challenge to bound the
classification error of DNNs used in real-world applications; in fact,
several robustness issues with DNNs have been discovered (e.g.,
adversarial examples~\cite{szegedy13}).

As an alternative way of assuring the safety of systems using DNNs,
researchers have focused on analyzing the \emph{trained DNNs} used in
specific
systems~\cite{dreossi17,dutta18,ehlers17,katz17,xiang17,xiang18}. While
analytic proofs of input/output properties are hard to obtain due to
the complexity of DNNs (namely, they are universal function
approximators~\cite{hornik89}), prior work has shown it is possible to
formally verify properties about DNNs by adapting existing
satisfiability modulo theory (SMT) solvers~\cite{ehlers17,katz17} and
mixed-integer linear program (MILP) optimizers~\cite{dutta18}. In
particular, these techniques can verify linear properties about the
DNN's output given linear constraints on the inputs. These approaches
exploit the piecewise-linear nature of the rectified linear units
(ReLUs) used in many DNNs and scale well by encoding the DNN as an
input to efficient SMT/MILP solvers. As a result, existing tools can
be used on reasonably sized DNNs, i.e., DNNs with several layers and a
few hundred neurons per layer.
%% %% \JW{I would avoid saying exactly how deep / many neurons
%% %% they can handle as part of the introduction -- unless you are
%% %% absolutely sure you have the numbers right (see last year's
%% %% CAV).} \RI{This is the right order of magnitude. If I don't give
%% %% order of magnitude, people will be asking what is ``reasonably
%% %% sized''.}

Although the SMT- and MILP-based approaches work well for the
verification of properties of the DNN itself, these techniques cannot
be straightforwardly extended to closed-loop systems using DNNs as
controllers. Specifically, the non-linear dynamics of a typical CPS
plant cannot be captured by these frameworks except for special cases
such as discrete-time linear systems. While it is in theory possible
to also approximate the plant dynamics with a ReLU-based DNN and
verify properties about it, it is not clear how to relate properties
of the approximating system to properties of the actual plant. As a
result, it is challenging to use existing techniques to reason about
the safety of the overall system.
%% \JW{So this segue paragraph has 1 issue.  Why can't I just
%% approximate the non-linear plant dynamics as a DNN with ReLUs?
%% This needs to be addressed because people will wonder why that
%% approach wasn't pursued.}

To overcome this limitation, we investigate an alternative approach,
named Verisig, that allows us to verify properties of the closed-loop
system. In particular, we consider CPS using sigmoid-based DNNs
instead of ReLU-based ones and use the fact that the sigmoid is the
solution to a quadratic differential equation.
%% \IL{Emphasize/state that this is the first paper to work on
%%   sigmoid-based DNN verification (using hybrid systems)?}
This allows us to transform the DNN into an equivalent hybrid system
such that a DNN with $L$ layers and $N$ neurons per layer can be
represented as a hybrid system with $L+1$ modes and $2N$ states. In
turn, we compose the DNN's hybrid system with the plant's and verify
properties of the composed system's reachable space by using existing
reachability tools such as dReach~\cite{kong15} and
Flow*~\cite{chen13}.
%% The mildly non-linear form of the differential
%% equations, coupled with the bounded number of mode transitions,
%% suggests that it is possible to verify properties of sigmoid-based
%% DNNs of similar sizes as the ReLU-based ones mentioned above.
We emphasize that this paper is not only the first to verify
properties about closed-loop systems with DNN controllers, but also
\emph{the first to consider sigmoid-based DNN verification using
  hybrid systems}.
%% (e.g., Flow* scales to systems with a few hundred states as
%% shown in Section~\ref{sec:comparison}). \JW{Again, I think we should
%%   avoid actual numbers in the intro -- can we just say "similar size"
%%   and leave it at that.}

To analyze the feasibility of the proposed approach, we show that DNN
reachability (i.e., checking whether the DNN's outputs lie in some set
given constraints on the inputs) can be transformed into a
real-arithmetic property with transcendental functions, which is
decidable if Schanuel's conjecture is true~\cite{wilkie97}. We also
prove that reachability is decidable for DNNs with one hidden layer,
given interval constraints on the inputs. Finally, by casting the
problem in the dReach framework, we also show that reachability is
$\delta$-decidable for general DNNs~\cite{gao14}.
%% \JW{This is a fine way of introducing the second contribution
%%   ... but the second half of it is lost in the the list of
%%   contributions in the final paragraph below.  It IS a contribution
%%   that you're able to formulate the problem in a way that fits the
%%   dReach input format.} \RI{I rewrote the paragraph below to
%%   reflect this contribution.}

To evaluate the applicability of Verisig, we consider two case
studies, one from reinforcement learning (RL) and one where a DNN is
used to approximate an MPC with safety guarantees. DNNs are
increasingly being used in both of these domains, so it is essential
to be able to verify properties of interest about the closed-loop
system. We trained a DNN for a benchmark RL task, Mountain Car, and
verified that the DNN will achieve its control task (i.e., drive an
underpowered car up a hill) within the problem constraints. In the MPC
approximation setting, we used an existing technique to approximate an
MPC with a DNN~\cite{royo18} and verified that a DNN-controlled
quadrotor will reach its goal without colliding into obstacles.

Finally, we evaluate the scalability of Verisig as used with Flow* by
training DNNs of increasing size on the Mountain Car problem. For each
DNN, we record the time it takes to compute the output reachable
set. For comparison purposes, we implemented a piecewise-linear
approach to approximate each sigmoid as suggested in prior
work~\cite{dutta18}; in this setting, the problem is cast as an MILP
program that can be solved by an MILP optimizer such as
Gurobi~\cite{gurobi}. We observe that, at similar levels of
approximation, the MILP-based approach is faster than Verisig+Flow*
for small DNNs and DNNs with few layers. However, the MILP-based
approach's runtimes increase exponentially for deeper networks whereas
Verisig+Flow* scales linearly with the number of layers since the same
computation is run in each mode (each layer). This is another positive
feature of our technique since deeper networks are known to learn more
efficiently than shallow ones~\cite{rolnick17,telgarsky16}.

In summary, this paper has three contributions: 1) we develop an
approach to transform a DNN into a hybrid system, which allows us to
cast the closed-loop system verification problem into a hybrid system
verification problem; 2) we show that the DNN reachability problem is
decidable for DNNs with one hidden layer and decidable for general
DNNs if Schanuel's conjecture holds;
%% and a proof that the $\delta$-reachability problem is decidable for DNNs with an arbitrary number of layers;\RI{I removed this claim after I couldn't figure out how to formalize the delta-reachability claim.}
3) we evaluate both the applicability and scalability of Verisig using
two case studies.

The rest of this paper is organized as
follows. Section~\ref{sec:problem} states the problem addressed in
this work. Section~\ref{sec:decidability} analyzes the decidability of
the verification problem, and Section~\ref{sec:approach} describes
Verisig. Sections~\ref{sec:results} and~\ref{sec:comparison} present
the case study evaluations in terms of applicability and
scalability. Section~\ref{sec:discussion} provides concluding remarks.
%\IL{Add: The rest of the paper is organized as follows....}
%% \JW{The contributions are still weird w.r.t. the previous
%% paragraph.  There you say that deeper networks are better and that
%% justifies using Verisig -- here you say we prove something about a
%% very shallow network.  I think you need generalize the second
%% contribution or downplay the deep network stuff as a motivation for
%% our appraoch.} \RI{I think this is better. I used ``show'' instead
%% of prove, which works for both I think. Either way, I don't think
%% that this is a major issue -- we haven't proved it's decidable for
%% many hidden layers but it's probably still true.}

%% Although ReLU-based networks are more popular due to
%% faster training time and easier parameter
%% initialization~\cite{krizhevsky12}, sigmoid-based DNNs are equally
%% powerful if properly trained~\cite{ioffe15} and potentially more
%% robust to input perturbations due to their smoothness (it is believed
%% that ReLU-based DNNs suffer from adversarial examples due to
%% ``pockets'' in the network's decision boundary caused by the
%% non-differentiable nature of ReLUs~\cite{szegedy13}). Thus,
%% sigmoid-based DNNs can be more useful in safety-critical applications
%% where robust performance is essential.

%!TEX root = main.tex

\section{Problem Formulation}
\label{sec:problem}
\begin{figure}[!t]
  \centering 
  \includegraphics[width=0.97\linewidth]{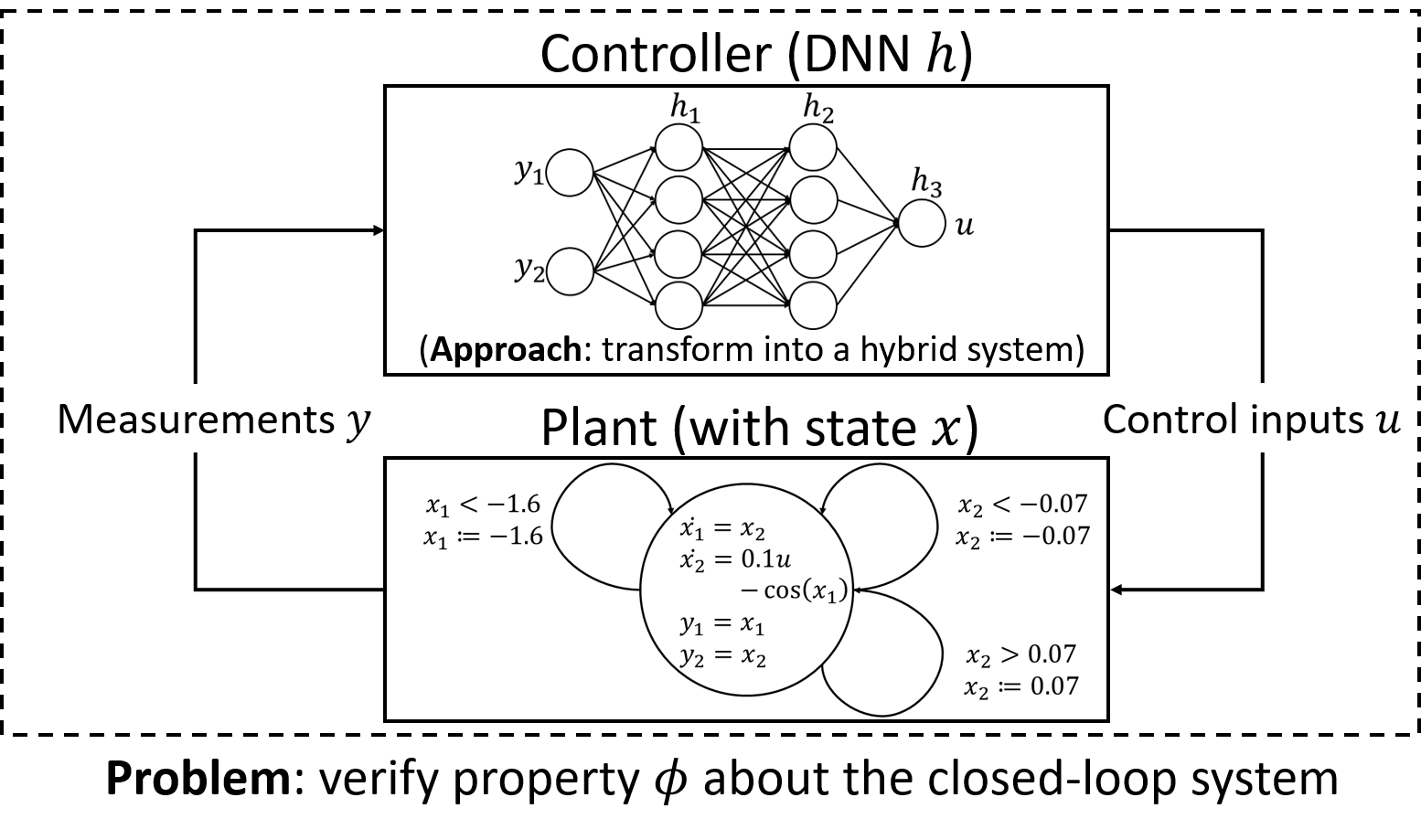}
  \caption{Illustration of the closed-loop system considered in this
    paper. The plant model is given as a standard hybrid system,
    whereas the controller is a DNN. The problem is to verify a
    property of the closed-loop system.}
  \label{fig:cl_system}
\end{figure}

This section formulates the problem considered in this paper. We
consider a closed-loop system, as shown in Figure~\ref{fig:cl_system},
with states $x$, measurements $y$, and a controller $h$. The states
and measurements are formalized in the next subsection, followed by
the (DNN) controller description and the problem statement itself,
i.e., the verification of a property $\phi$ about the closed-loop
system.

\subsection{Plant Model}
\label{sec:plant_model}
We assume the plant dynamics are given as a hybrid system.  A hybrid
system's state space consists of a finite set of discrete modes and a
finite number of continuous variables~\cite{lafferriere99}. Within
each mode, continuous variables evolve according to known differential
equations; we focus specifically on differential equations with
respect to time. Furthermore, each mode contains a set of invariants
that hold true while the system is in that mode. Transitions between
modes are controlled by guards, which represent conditions on the
continuous variables. Finally, continuous variables can be reset
during each mode transition. The formal definition is provided below.

\begin{definition}[Hybrid System]
A hybrid system with inputs $u$ and outputs $y$ is a tuple $H = (X,
X_0, F, E, I, G, R, g)$ where
\begin{itemize}

\item $X = X_D \times X_C$ is the state space with $X_D = \{q_1,
  \dots, q_m\}$ and $X_C$ a manifold;

\item $X_0 \subseteq X$ is the set of initial states;

\item $F: X \longrightarrow TX_C$ assigns to each discrete mode $q \in
  X_D$ a vector field $f_q$, i.e., $\dot{x}_c = f_q(x_c, u)$
  in mode $q$;

\item $E \subseteq X_D \times X_D$ is the set of mode transitions;

\item $I: X_D \longrightarrow 2^{X_C}$ assigns to $q \in X_D$ an
  invariant of the form $I(q) \subseteq X_C$;
  
\item $G: E \longrightarrow 2^{X_C}$ assigns to each edge $e = (q_1,
  q_2)$ a guard $U \subseteq I(q_1)$;
  
\item $R: E \longrightarrow 2^{X_C}$ assigns to each edge $e =
  (q_1, q_2)$ a reset $V \subseteq I(q_2)$;

\item $g: X \longrightarrow \mathbb{R}^p$ is the observation model
  such that $y = g(x)$.
  
\end{itemize}
\end{definition}

\subsection{DNN Controller Model}
\label{sec:dnn_model}
As mentioned in Section~\ref{sec:intro}, the controller is implemented
by a DNN. To simplify the presentation, we assume the DNN is a
feedforward neural network. However, the proposed technique applies to
all common classes such as convolutional, residual or recurrent DNNs.

A DNN controller maps measurements $y$ to control inputs $u$ and can
be defined as a function $h$ as follows: $h: \mathbb{R}^p \rightarrow
\mathbb{R}^q$. As illustrated in Figure~\ref{fig:cl_system}, a typical
DNN has a layered architecture and can be represented as a composition
of its $L$ layers: $$h(y) = h_L \circ h_{L-1} \circ \dots \circ h_1
(y),$$ where each hidden layer $h_i$, $i \in \{1, \dots, L-1\}$, has
an element-wise (with each element called a neuron) non-linear
activation function: $$h_i(y) = a(W_iy + b_i).$$ Each $h_i$ is
parameterized by a weight matrix $W_i$ and an offset vector $b_i$. The
most common types of activation functions are
\begin{itemize}

\item ReLU: $a(y) := ReLU(y) = \max\{0,y\}$,

\item sigmoid: $a(y) := \sigma(y) = \frac{1}{1 + e^{-y}}$,

\item hyperbolic tangent: $a(y) := \text{tanh}(y) = \frac{e^y -
  e^{-y}}{e^y + e^{-y}}$.
  
\end{itemize}
%
%As argued in the introduction, we consider sigmoid and tanh activation
%functions (which also fall in the broad class of sigmoidal functions)
%in this paper, different from most existing works that assume ReLU
%activation functions.
%
As argued in the introduction, and different from most existing works
that assume ReLU activation functions, this work considers sigmoid and
tanh activation functions (which also fall in the broad class of
sigmoidal functions). Finally, the last layer $h_L$ is
linear:\footnote{The last layer is by convention a linear layer,
  although it could also have a non-linear activation, as shown in the
  Mountain Car case study.} $$h_L(y) = W_Ly + b_L,$$ which is
parameterized by a matrix $W_L$ and a vector $b_L$.

During training, the parameters $(W_1, b_1, \dots, W_{L}, b_{L})$ are
learned through an optimization algorithm (e.g., stochastic gradient
descent~\cite{goodfellow16}) used on a training set. In this paper, we
assume the DNN \emph{is already trained}, i.e., all parameters are
known and fixed.

\subsection{Problem Statement}

Given the plant model and the DNN controller model described in this
section, we identify two verification problems. The first one is the
reachability problem for the DNN itself.

\begin{problem}
\label{prob:dnn_prob}
Let $h$ be a DNN as described in Section~\ref{sec:dnn_model}. The DNN
verification problem, expressed as property $\phi_{\text{dnn}}$, is to
verify a property $\psi_{\text{dnn}}$ on the DNN's outputs $u$ given
constraints $\xi_{\text{dnn}}$ on the inputs $y$:
\begin{align}
\phi_{\text{dnn}}(y, u) \equiv (\xi_{\text{dnn}}(y) \wedge h(y) = u) \Rightarrow \psi_{\text{dnn}}(u).
\end{align}
\end{problem}

Problem~\ref{prob:cl_prob} is to verify a property of the closed-loop
system.

\begin{problem}
\label{prob:cl_prob}
Let $S = h\ ||\ H_P$ be the composition of a DNN controller $h$
(Section~\ref{sec:dnn_model}) and a plant $P$, modeled with a hybrid
system $H_P$ (Section~\ref{sec:plant_model}). Given a property $\xi$
on the initial states $X_0$ of $P$, the problem, expressed as property
$\phi$, is to verify a property $\psi$ of the reachable states of $P$:
\begin{equation}
\label{eq:cl_prop}
\phi(X_0, x(t)) \equiv \xi(X_0) \Rightarrow \psi(x(t)),\ \forall t \ge 0.
\end{equation}
\end{problem}
Our approach to Problem~\ref{prob:dnn_prob}, namely transforming the
DNN into an equivalent hybrid system, also presents a solution to
Problem~\ref{prob:cl_prob} since we can compose the DNN's hybrid
system with the plant's and can use existing hybrid system
verification tools.

\begin{approach}
We approach Problem~\ref{prob:dnn_prob} by transforming $h$ into an
equivalent hybrid system $H_h$ such that if $x_0$ is an initial
condition of $H_h$, then the only reachable state is
$h(x_0)$. Problem~\ref{prob:cl_prob} is addressed by verifying
properties about the composed hybrid system $H_h\ ||\ H_P$.
\end{approach}

%!TEX root = main.tex

\section{On the Decidability of Sigmoid-Based DNN Reachability}
\label{sec:decidability}
Before describing our approach to the problems stated in
Section~\ref{sec:problem}, a natural question to ask is whether these
problems are decidable. The answer is not obvious due to the
non-linear nature of the sigmoid. This section shows that if the DNN's
inputs and outputs are given as a real-arithmetic property, then
reachability can be stated as a real-arithmetic property with
transcendental functions, which is decidable if Schanuel's conjecture
is true~\cite{wilkie97}. Furthermore, we prove decidability for the
case of DNNs with a single hidden layer, under mild assumptions on the
DNN parameters. Finally, we argue that by casting the DNN verification
problem into a hybrid system verification problem, we obtain a
$\delta$-decidable problem~\cite{gao14}.\footnote{Note that the
  results presented in this section hold for DNNs with sigmoid
  activation functions, but similar results can be shown for tanh.}

\subsection{DNNs with multiple hidden layers}

As formalized in Section~\ref{sec:problem}, the reachability property
of a DNN $h$ with inputs $y$ and outputs $u$ has the general form:
\begin{align}
\label{eq:reach_prop}
\phi(y, u) \equiv (\xi(y) \wedge h(y) = u) \Rightarrow \psi(u),
\end{align}
where $\xi$ and $\psi$ are given properties on the real
numbers. Verifying properties on the real numbers is undecidable in
general. A notable exception is first-order logic formulas over
$(\mathbb{R}, <, +, -, \cdot, 0, 1)$, i.e., the language where $<$ is
the relation, +, -, and $\cdot$ are functions, and 0 and 1 are the
constants~\cite{tarski98}; we denote such formulas by
$\mathcal{R}$-formulas. Intuitively, $\mathcal{R}$-formulas are
first-order logic statements where the constraints are polynomial
functions of the variables with integer coefficients. Example
$\mathcal{R}$-formulas are $\forall x\ \forall y\ :\ xy > 0, \exists
x\ :\ x^2 - 2 = 0$, and $\exists w\ :\ xw^2 + yw + z = 0$.

Another relevant language is $(\mathbb{R}, <, +, -, \cdot, \text{exp},
0, 1)$, which also includes exponentiation; we denote these formulas
by $\mathcal{R}_{\text{exp}}$-formulas. Although it is an open
question whether verifying $\mathcal{R}_{\text{exp}}$-formulas is
decidable, it is known that decidability is connected to Schanuel's
conjecture~\cite{wilkie97}. Schanuel's conjecture concerns the
transcendence degree of certain field extensions of the rational
numbers and, if true, would imply that verifying
$\mathcal{R}_{\text{exp}}$-formulas is decidable~\cite{wilkie97}.

We focus on the case where $\xi$ and $\psi$ are
$\mathcal{R}$-formulas. The exponentiation in the sigmoid means that
$\phi$, however, is not a $\mathcal{R}$-formula. We show below that
$\phi$ is in fact an $\mathcal{R}_{\text{exp}}$-formula, which implies
that DNN reachability is decidable if Schanuel's conjecture is
true~\cite{wilkie97}.

\begin{proposition}
\label{prop:dnn_multi}
Let $h: \mathbb{R}^p \rightarrow \mathbb{R}^q$ be a sigmoid-based DNN
with $L - 1$ hidden layers (with $N$ neurons each) and rational
parameters. The property $\phi(y, u) \equiv (\xi(y) \wedge h(y) = u)
\Rightarrow \psi(u)$, where $\xi$ and $\psi$ are
$\mathcal{R}$-formulas, is an $\mathcal{R}_{\text{exp}}$-formula.
\end{proposition}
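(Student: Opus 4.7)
The plan is to show, by an explicit syntactic construction, that the graph of $h$, i.e.\ the relation $h(y) = u$, can be written as an $\mathcal{R}_{\text{exp}}$-formula; the conclusion then follows because $\mathcal{R}$-formulas are a fortiori $\mathcal{R}_{\text{exp}}$-formulas and the class $\mathcal{R}_{\text{exp}}$ is closed under the Boolean connectives and existential quantification needed to assemble $\phi$.

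First I would introduce, for every neuron in every hidden layer $i\in\{1,\dots,L-1\}$, two fresh existentially quantified variables: a pre-activation $v_i^{(j)}$ and a post-activation $z_i^{(j)}$, where $j$ ranges over the $N$ neurons of layer $i$. The pre-activation is an affine combination $v_i^{(j)} = \sum_k (W_i)_{jk}\, z_{i-1}^{(k)} + (b_i)_j$, with $z_0 := y$. Since all weights and biases are rational, clearing denominators turns each such equality into a polynomial equation with integer coefficients, hence an atomic $\mathcal{R}$-formula. The same applies to the final linear layer $u = W_L z_{L-1} + b_L$.

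The only transcendental piece is the sigmoid coupling $z_i^{(j)} = \sigma(v_i^{(j)}) = 1/(1+e^{-v_i^{(j)}})$. Using an auxiliary variable $e_i^{(j)}$ and the atom $e_i^{(j)} = \exp(-v_i^{(j)})$, this is equivalent to the $\mathcal{R}_{\text{exp}}$-conjunction
\begin{equation}
e_i^{(j)} = \exp(-v_i^{(j)})\ \wedge\ z_i^{(j)}\,(1 + e_i^{(j)}) = 1.
\end{equation}
Taking the conjunction of all such per-neuron encodings over all layers and existentially quantifying over every auxiliary variable $v_i^{(j)}$, $z_i^{(j)}$, $e_i^{(j)}$ yields a formula $\chi(y,u)$ in $\mathcal{R}_{\text{exp}}$ that is logically equivalent to $h(y)=u$.

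With that lemma in hand, the full property $\phi$ is simply $(\xi(y) \wedge \chi(y,u)) \Rightarrow \psi(u)$, which is a Boolean combination of $\mathcal{R}$-formulas and one $\mathcal{R}_{\text{exp}}$-formula and therefore itself an $\mathcal{R}_{\text{exp}}$-formula, establishing the proposition. I do not expect any real obstacle here: the construction is bookkeeping, and the only subtle point is the encoding of $\sigma$, which I handle by the algebraic rewriting above so that $\exp$ appears only as a single atomic term per neuron and no division is required in the final formula.
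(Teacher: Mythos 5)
Your proposal is correct and follows essentially the same route as the paper: unfold the network into per-neuron constraints with fresh existentially quantified variables, isolate the single transcendental atom $\exp(\cdot)$ per neuron, and clear the rational weights and biases down to integer-coefficient polynomial atoms. The only (cosmetic) difference is that you pull the pre-activation out as its own variable so that $\exp$ is applied to a bare variable and all denominator-clearing happens in purely affine atoms, whereas the paper keeps the affine combination inside the exponent and instead rescales the layer inputs by the common denominator $d_0$; both are valid bookkeeping.
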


\begin{proof}
Since $\psi$ is an $\mathcal{R}$-formula, it suffices to show that
$\phi_0(y, u) \equiv \xi(y) \wedge h(y) = u$ can be expressed as an
$\mathcal{R}_{\text{exp}}$-formula. Note that
\begin{align*}
\phi_0(y, u) \equiv\ &\xi(y) \wedge h_1^1 = \frac{1}{1 + \text{exp}\{-(w_1^1)^{\top}y - b_1^1\}} \wedge \dots\\
&\wedge h_1^N = \frac{1}{1 + \text{exp}\{-(w^N_{1})^{\top}y - b^N_1\}} \wedge \dots\\
& \wedge h_{L-1}^{1} = \frac{1}{1 + \text{exp}\{-(w^1_{L-1})^{\top}h_{L-2} - b^1_{L-1}\}} \wedge \dots\\
&\wedge h_{L-1}^{N} = \frac{1}{1 + \text{exp}\{-(w^N_{L-1})^{\top}h_{L-2} - b^N_{L-1}\}}\\
&\wedge u = W_L [h_{L-1}^{1}, \dots, h_{L-1}^{N}]^{\top} + b_L,
\end{align*}
where $(w_i^j)^{\top}$ is row $j$ of $W_i$, and $h_l = [h_l^1, \dots,
  h_l^N]^{\top}, l \in \{1, \dots, L-1\}$. The last constraint, call
it $p(u)$, is already an $\mathcal{R}$-formula. Let $[W_i]_{jk} =
p_{jk}^i/q_{jk}^i$, with $p_{jk}^i$ and $q_{jk}^i > 0$ integers, and
let $d_0 = q_{11}^1q_{12}^1\cdots q_{Np}^{L-1}$. To remove fractions
from the exponents, we add extra variables $z_i$ and $v_i^j$ and
arrive at an equivalent property $\phi_{\mathbb{Z}}$, which is an
$\mathcal{R}_{\text{exp}}$-formula since all denominators are
$\mathcal{R}_{\text{exp}}$-formulas:
\begin{align*}
&\phi_{\mathbb{Z}}(y, u) \equiv\ \xi(y) \wedge z_0d_0 = y \wedge h_1^1 = \frac{1}{1 + \text{exp}\{-(r_1^1)^{\top}z_0 - v_1^1\}} \wedge \dots\\
&\wedge h_1^N = \frac{1}{1 + \text{exp}\{-(r^N_{1})^{\top}z_0 - v^N_1\}} \wedge v_1^1 = b_1^1 \wedge \dots \wedge v_1^N = b_1^N \wedge \dots\\
& \wedge z_{L-2}d_0 = h_{L-2} \wedge h_{L-1}^{1} = \frac{1}{1 + \text{exp}\{-(r^1_{L-1})^{\top}z_{L-2} - v^1_{L-1}\}} \wedge \dots\\
&\wedge h_{L-1}^{N} = \frac{1}{1 + \text{exp}\{-(r^N_{L-1})^{\top}z_{L-2} - v^N_{L-1}\}} \\
&\wedge v_{L-1}^1 = b_{L-1}^1 \wedge \dots \wedge v_{L-1}^N = b_{L-1}^N \wedge p(u),
\end{align*}
where $r_i^j = w_i^jd_0$ are vectors of integers; $v_i^j = b_i^j$ are
$\mathcal{R}$-formulas since $b_i^j$ are rational.
\end{proof}

\begin{corollary}[\cite{wilkie97}]
If Schanuel's conjecture holds, then verifying the property $\phi(y,
u) \equiv (\xi(y) \wedge h(y) = u) \Rightarrow \psi(u)$ is decidable
under the conditions stated in Proposition~\ref{prop:dnn_multi}.
\end{corollary}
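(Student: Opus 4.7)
The plan is to chain together the two ingredients that are already on the table: Proposition~\ref{prop:dnn_multi}, which recasts the DNN reachability property $\phi$ as an $\mathcal{R}_{\text{exp}}$-formula, and Wilkie's theorem~\cite{wilkie97}, which states that the first-order theory of $(\mathbb{R}, <, +, -, \cdot, \text{exp}, 0, 1)$ is decidable conditional on Schanuel's conjecture. The corollary is essentially a one-line composition of these two facts, so my task reduces to verifying that they fit together cleanly and that what one means by ``verifying $\phi$'' can be phrased as deciding the truth of an $\mathcal{R}_{\text{exp}}$-sentence.

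First I would apply Proposition~\ref{prop:dnn_multi} verbatim to obtain the equivalent formula $\phi_{\mathbb{Z}}$ expressing $\xi(y) \wedge h(y) = u$ using only integer constants, the ordered-field operations, and $\text{exp}$. Substituting into the implication yields an $\mathcal{R}_{\text{exp}}$-formula equivalent to $\phi(y,u)$. To turn this into a proper decision problem I would take the universal closure $\forall y\, \forall u\, \phi(y, u)$, which is still in the $\mathcal{R}_{\text{exp}}$-signature since quantification over real variables is permitted there; the auxiliary variables $z_i, v_i^j$ introduced in the proof of Proposition~\ref{prop:dnn_multi} to clear denominators can likewise be bound by existential quantifiers without leaving the language. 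Then I would invoke Wilkie's theorem: under Schanuel's conjecture the truth value of any $\mathcal{R}_{\text{exp}}$-sentence is computable, and in particular the truth of the sentence we have just constructed can be decided algorithmically.

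I do not anticipate any genuine obstacle here, since the substantive work is the translation already carried out in Proposition~\ref{prop:dnn_multi}; the only remark to include for honesty is that the decidability is conditional on Schanuel's conjecture, so the corollary inherits exactly that conditional status and nothing stronger should be claimed.
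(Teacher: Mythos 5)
Your proposal is correct and matches the paper's intent exactly: the corollary is stated as an immediate consequence of Proposition~\ref{prop:dnn_multi} together with the Macintyre--Wilkie result cited from~\cite{wilkie97}, which is precisely the composition you carry out. The only detail you add beyond what the paper leaves implicit --- existentially binding the auxiliary variables and taking the universal closure to obtain a genuine $\mathcal{R}_{\text{exp}}$-sentence --- is a sensible clarification, not a departure.
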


\begin{remark}
Note that by transforming the DNN into an equivalent hybrid system (as
described in Section~\ref{sec:approach}), we show that DNN
reachability is $\delta$-decidable as well~\cite{gao14}. Intuitively,
$\delta$-decidability means that relaxing all constraints by a
rational $\delta$ results in a decidable problem; as shown in prior
work~\cite{gao14}, reachability is $\delta$-decidable for hybrid
systems with dynamics given by Type 2 computable functions. Since the
sigmoid is a Type 2 computable function, we have strong evidence to
believe that the proposed technique is a promising approach.
\end{remark}

\subsection{DNNs with a single hidden layer}

Regardless of whether Schanuel's conjecture holds, we can show that
DNN reachability is decidable for DNNs with a single hidden layer. In
particular, assuming interval bounds are given for each input, it is
possible to transform the reachability property into an
$\mathcal{R}$-formula, thus showing that verifying reachability is
decidable.

\begin{theorem}
\label{thm:decidable}
Let $h: \mathbb{R}^p \rightarrow \mathbb{R}^q$ be a sigmoid-based DNN
with one hidden layer (with $N$ neurons), i.e., $h(x) =
W_2(\sigma(W_1x + b_1)) + b_2$. Let $[W_1]_{ij} = p_{ij}/q_{ij}$ be
all rational and let $d_0 = q_{11}q_{12}\cdots q_{Np}$. Consider the
property
\begin{align*}
\phi(y, u) \equiv (\exists y \in I_y \wedge u = h(y)) \Rightarrow \psi(u),
\end{align*}
where $y = [y_1, \dots, y_p]^{\top} \in \mathbb{R}^p$, $u = [u_1,
  \dots, u_q]^{\top} \in \mathbb{R}^q$, $\psi$ is an
$\mathcal{R}$-formula, and $I_y = [\alpha_1, \beta_1] \times \dots
\times [\alpha_p, \beta_q] \subseteq \mathbb{R}^p$, i..e., the
Cartesian product of $p$ one-dimensional intervals. Then verifying
$\phi(y, u)$ is decidable if $e^{b_1^i}$, $e^{\alpha_j/d_0}$, and
$e^{\beta_j/d_0}$ are rational numbers for all $i \in \{1, \dots, N\}$
and $j \in \{1, \dots, p\}$ ($b_1^i$ is element $i$ of vector $b_1$).
\end{theorem}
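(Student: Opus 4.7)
The plan is to eliminate the exponentials in $\sigma$ by the change of variables $t_j := e^{y_j/d_0}$, which is a monotone bijection from $\mathbb{R}$ onto $(0,\infty)$. Under this substitution, the box constraint $y_j \in [\alpha_j, \beta_j]$ translates to $t_j \in [e^{\alpha_j/d_0}, e^{\beta_j/d_0}]$, a \emph{rational} interval by the theorem's hypothesis; so the input region is $\mathcal{R}$-definable in the new variables.

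The key computation I would then carry out is that the scaled weight $r_{ij} := [W_1]_{ij}\cdot d_0 = p_{ij}(d_0/q_{ij})$ is an integer because $q_{ij}$ divides $d_0$. Hence each sigmoid exponent rewrites as a Laurent monomial in $t$:
\[
\exp\!\bigl(-(w_1^i)^\top y - b_1^i\bigr) \;=\; e^{-b_1^i}\,\prod_{j=1}^p t_j^{-r_{ij}}.
\]
Splitting the exponents by sign, say $P_i(t) := \prod_{r_{ij}>0} t_j^{r_{ij}}$ and $Q_i(t) := \prod_{r_{ij}<0} t_j^{-r_{ij}}$, the defining equation $s_i = \sigma((w_1^i)^\top y + b_1^i)$ is equivalent on $\{t > 0\}$ to the polynomial relation $e^{b_1^i}(1-s_i)P_i(t) = s_i Q_i(t)$, which has rational coefficients because $e^{b_1^i}$ (and therefore $e^{-b_1^i}$) is rational by hypothesis. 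The output layer $u = W_2 s + b_2$ is already affine with rational coefficients.

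Putting the pieces together, the antecedent $\exists y \in I_y$ with $u = h(y)$ becomes equivalent, via the bijection $y_j \leftrightarrow t_j$ and the introduction of auxiliary variables $s_i$, to an existential $\mathcal{R}$-formula in $(t,s)$: rational box constraints on $t$, $N$ polynomial equations pinning down $s$, and the affine constraint on $u$. Since $\psi(u)$ is already an $\mathcal{R}$-formula by assumption, the full implication $\phi$ is an $\mathcal{R}$-formula, and decidability then follows from Tarski's quantifier-elimination theorem for real closed fields. The main technical care will lie in (i) the bookkeeping of signs so that $P_i$ and $Q_i$ are genuine polynomials rather than rational functions, and (ii) verifying that clearing the denominator on $\{t > 0\}$ introduces no spurious solutions---which will follow from the strict positivity of $P_i(t) + e^{-b_1^i}Q_i(t)$ on that domain together with strict monotonicity of $\sigma$.
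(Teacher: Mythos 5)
Your proposal is correct and follows essentially the same route as the paper's proof: rescale the weights by $d_0$ to make the exponents integral, exponentiate the input variables so the box constraints become rational intervals (using the hypothesis on $e^{\alpha_j/d_0}$ and $e^{\beta_j/d_0}$) and the sigmoid exponentials become Laurent monomials, use the rationality of $e^{b_1^i}$ to clear denominators into polynomial equations with rational coefficients, and conclude by Tarski's decidability of the reals. The only cosmetic difference is that the paper handles the negative exponents by introducing reciprocal variables constrained by $y_jz_j = 1$, whereas you fold them into the single cleared-denominator relation $e^{b_1^i}(1-s_i)P_i(t) = s_iQ_i(t)$; both hinge on the same positivity observation to justify the equivalence.
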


\begin{proof}
The proof technique borrows ideas from~\cite{lafferriere99}. It
suffices to show that $\phi(y, u)$ is an $\mathcal{R}$-formula. Since
$\psi(u)$ is an $\mathcal{R}$-formula, we focus on the remaining part
of $\phi(y, u)$, call it $\phi_0(y, u)$. Then
\begin{align*}
\phi_0&(y, u) \equiv \exists y \in I_y \wedge h^1_1 = \frac{1}{1 +
  \text{exp}\{-(w_1^1)^{\top}y - b_1^1\}} \wedge \dots\\
&\wedge h^N_1 = \frac{1}{1 + \text{exp}\{-(w^N_{1})^{\top}y - b^N_1\}} \wedge u = W_2 [h^1_1, \dots, h^N_1]^{\top} + b_2,
\end{align*}
where $(w_1^i)^{\top}$ is row $i$ of $W_1$. Note that the last
constraint in $\phi_0(y, u)$, call it $p(u)$, is an
$\mathcal{R}$-formula. To remove fractions from the exponentials, we
change the limits of $y$. Consider the property
\begin{align*}
\phi_{\mathbb{Z}}(y,u) &\equiv \exists y \in I_y^{\mathbb{Z}} \wedge h^1_1 = \frac{1}{1 + \text{exp}\{-(r^1_1)^{\top}y - b_1^1\}} \wedge \dots\\
&\wedge h^N_1 = \frac{1}{1 + \text{exp}\{-(r_1^N)^{\top}y - b_1^N\}} \wedge p(u),
\end{align*}
where $I_y^{\mathbb{Z}} = [\alpha_1/d_0, \beta_1/d_0] \times \dots
\times [\alpha_p/d_0, \beta_p/d_0]$ and each $r_1^i = d_0w_1^i$ is a
vector of integers. Note that $\phi_0(y, u) \equiv
\phi_{\mathbb{Z}}(y, u)$, since a change of variables $z = y/d_0$
implies that $z \in I_y^{\mathbb{Z}}$ iff $y \in I_y$. To remove
exponentials from the constraints, we use their monotonicity property
and transform $\phi_{\mathbb{Z}}(x,y)$ into an equivalent property
$\phi_e(x,y)$:
\begin{align*}
\phi_e(y, u) &\equiv \exists y \in I_y^e \wedge h_1^1 = \frac{1}{1 +
  y_1^{r^1_{11}}\cdots y_p^{r^1_{1p}}\text{exp}\{- b_1^1\}} \wedge \dots\\
&\wedge h^N_1 = \frac{1}{1 + y_1^{r^N_{11}}\cdots y_p^{r^N_{1p}}\text{exp}\{- b^N_1\}} \wedge p(u),
\end{align*}
where $I_y^{e} = [e^{-\beta_1/d_0}, e^{-\alpha_1/d_0}] \times \dots
\times [e^{-\beta_p/d_0}, e^{-\alpha_p/d_0}]$, and $r_{1j}^i$ is
element $j$ of $r_1^i$. To see that $\phi_e(y, u) \equiv
\phi_{\mathbb{Z}}(y, u)$, take any $y \in I_y^{\mathbb{Z}}$ and note
that $\text{exp}\{-r^i_{1j}y_j\} = z_j^{r^i_{1j}}$, with $z_j =
e^{-y_j}$; thus, $z \in I_x^e$.

The final step transforms the property $\phi_e(y, u)$ into an
equivalent property $\nu(y, u)$ to eliminate negative integers
$r^i_{1j}$ in the exponents:
\begin{align*}
  \nu(y, u) &\equiv \exists y \in I_y^e\ \exists z \in I_y^{e-}\ y_1z_1 = 1 \wedge \dots \wedge y_p z_p = 1\\
  &\wedge h_1^1
= \frac{1}{1 + \displaystyle\prod_{j \in \mathcal{I}_{1}^+}y_j^{r^1_{1j}}\prod_{j \in \mathcal{I}_{1}^-}z_j^{-r^1_{1j}}\text{exp}\{-
  b_1^1\}} \wedge \dots\\ &\wedge h^N_1 = \frac{1}{1 +
  \displaystyle\prod_{j \in \mathcal{I}_{N}^+}y_j^{r^N_{1j}}\prod_{j \in \mathcal{I}_{N}^-}z_j^{-r^N_{1j}}\text{exp}\{- b_1^N\}} \wedge
p(u),
\end{align*}
where $I_y^{e-} = [e^{\alpha_1/d_0}, e^{\beta_1/d_0}] \times \dots
\times [e^{\alpha_p/d_0}, e^{\beta_p/d_0}]$, $\mathcal{I}_i^+ = \{k
\mid r^i_{1k} \ge 0\}$, and $\mathcal{I}_i^- = \{k \mid r^i_{1k} <
0\}$. Note that $\phi_e(y, u) \equiv \nu(y, u)$ since for $r^i_{1j} <
0$, the constraint $z_jy_j = 1$ implies $y_j^{r^i_{1j}} =
z_j^{-r^i_{1j}}$.

Thus, if $e^{b_1^j}, e^{\alpha_i/d_0}$, and $e^{\beta_i/d_0}$ are
rational for all $i \in \{1, \dots, p\},\\ {j \in \{1, \dots, N\}}$, one
can show that $\nu(y, u)$ is an $\mathcal{R}$-formula by multiplying
all $h^i_1$ constraints by their denominators. All denominators are
positive since $y_i$ and $z_i$ are constrained to be positive.
\end{proof}

The single-hidden-layer assumption in Theorem~\ref{thm:decidable} is
not too restrictive since DNNs with one hidden layer are still
universal approximators. At the same time, the technique used to prove
Theorem~\ref{thm:decidable} cannot be applied to multiple hidden
layers since the DNN becomes an $\mathcal{R}_{\text{exp}}$-formula in
that case. Note that it might be possible to show more general
versions of Theorem~\ref{thm:decidable} by relaxing the interval
constraints or the real-arithmetic constraints.  Finally, note that
the assumption on the DNN's weights is mild since a DNN's weights can
be altered in such a way that they are arbitrarily close to the
original weights while also satisfying the theorem's requirements.

\section{DNN Reachability Using Hybrid Systems}
\label{sec:approach}
Having analyzed the decidability of DNN reachability in
Section~\ref{sec:decidability}, in this section we investigate an
approach to computing the DNN's reachable set. In particular, we
transform the DNN into an equivalent hybrid system, which allows us to
use existing hybrid system reachability tools such as
Flow*. Sections~\ref{sec:sigmoid_de} and~\ref{sec:dnn2hs} explain the
transformation technique, and Section~\ref{sec:toy_example} provides
an illustrative example. Finally, Section~\ref{sec:tools} discusses
existing hybrid system reachability tools. Note that this section
focuses on the case of sigmoid activations; the treatment of tanh
activations is almost identical -- the differences are noted in the
relevant places in the section.

%% Before presenting the hybrid-system-based technique, we note that
%% other approaches could be developed as well such as the SMT- and
%% MILP-based methods used for ReLU-based DNNs. These approaches were
%% developed for piece-wise linear functions and could be extended to
%% sigmoids by bounding the sigmoid from above and below by piece-wise
%% linear functions. However, there are two main drawbacks associated
%% with these techniques as compared with Verisig: 1) they cannot be
%% composed with the plant's hybrid system in order to verify properties
%% about the closed-loop system; 2) approximating the sigmoid
%% sufficiently well requires multiple linear pieces (around 100 pieces
%% in the case studies presented in this paper), which eventually affects
%% the scalability of these techniques as discussed in
%% Section~\ref{sec:comparison}. For these reasons, we adopt the
%% hybrid-system-based approach that allows to both verify properties
%% about the closed-loop system and achieve very good approximation of
%% the sigmoid function.
%% \JW{This paragraph fits poorly here -- it is almost like a "related
%% work" section ... can it be moved to the introduction?}

\subsection{Sigmoids as solutions to differential equations}
\label{sec:sigmoid_de}
The main observation that allows us to transform a DNN into an
equivalent hybrid system is the fact that the sigmoid derivative can
be expressed in terms of the sigmoid itself:\footnote{The
  corresponding differential equation for tanh is
  $(d\text{tanh}/dx)(x) = 1 - \text{tanh}^2(x)$.}
\begin{align}
\frac{d\sigma}{dx}(x) = \sigma(x) (1 - \sigma(x)).
\end{align}
Thus, the sigmoid can be treated as a quadratic dynamical
system. Since we would like to know the possible values of the sigmoid
for a given set of inputs, we introduce a ``time'' variable $t$ that
is multiplied by the inputs. In particular, consider the proxy
function
\begin{align}
\label{eq:proxy}
g(t,x) = \sigma(tx) = \frac{1}{1 + e^{-xt}},
\end{align}
such that $g(1,x) = \sigma(x)$ and, by the chain rule,
\begin{align}
  \frac{\partial g}{\partial t}(t,x) = \dot{g}(t,x) = x g(t,x) (1 - g(t,x)).
\end{align}
Thus, by tracing the dynamics of $g$ until time $t = 1$, we obtain
exactly the value of $\sigma(x)$; the initial condition is $g(0,x) =
0.5$, as can be verified from~\eqref{eq:proxy}. Since each neuron in a
sigmoid-based DNN is a sigmoid function, we can use the proxy function
$g$ to transform the entire DNN into a hybrid system, as described
next.

\begin{figure*}[!th]
  \centering 
       \subfloat[Example DNN.]{
           \includegraphics[width=0.35\linewidth]{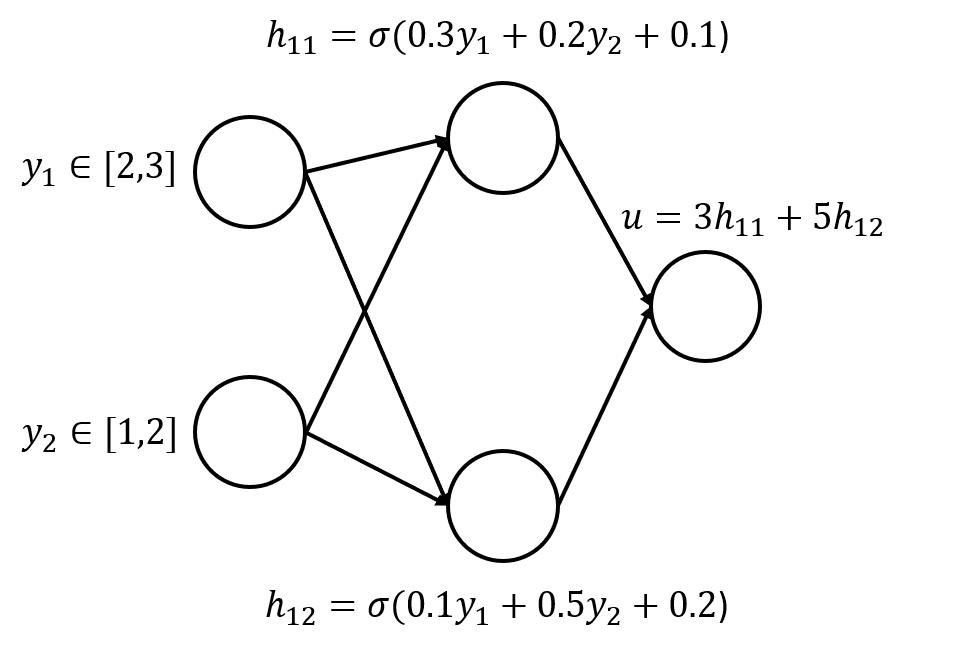}
           \label{fig:dnn_toy}
       }
       \subfloat[Equivalent hybrid system.]{
	   \includegraphics[width=0.64\linewidth]{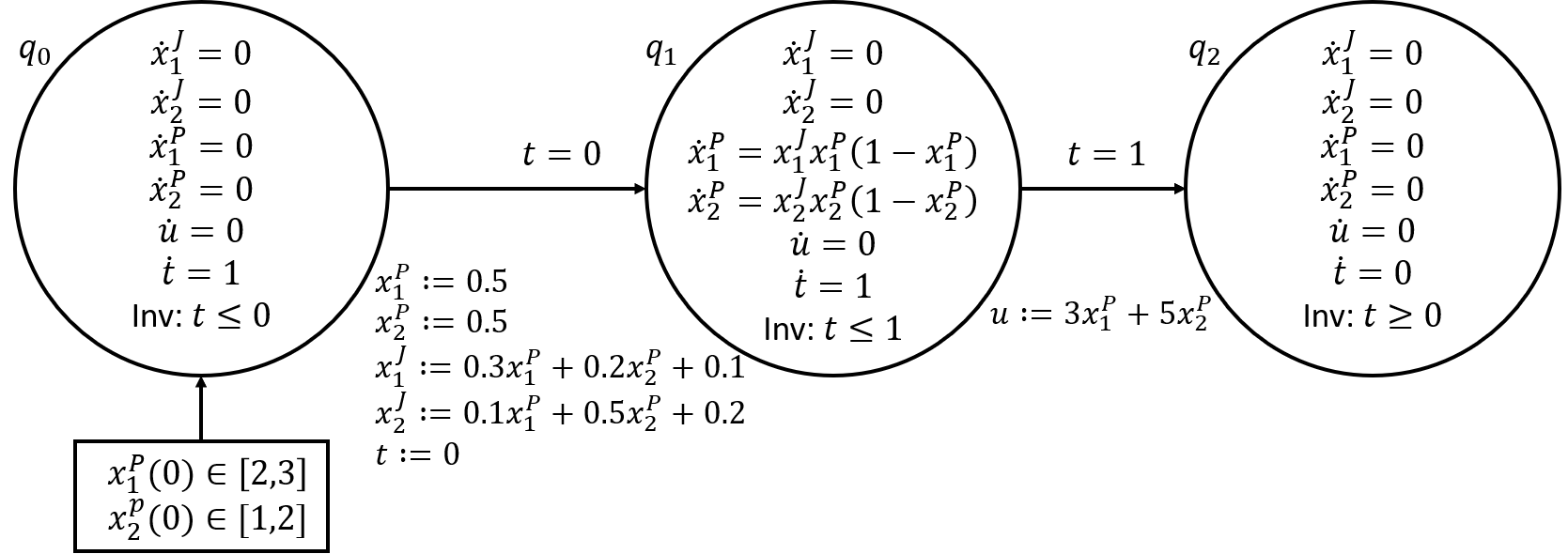}
           \label{fig:hs_toy}             
       }
  \caption{Small example illustrating the transformation from a DNN to a
    hybrid system.}
  \label{fig:toy}
\end{figure*}

\subsection{Deep Neural Networks as Hybrid Systems}
\label{sec:dnn2hs}
Given the proxy function $g$ described in
Section~\ref{sec:sigmoid_de}, we now show how to transform a DNN into
a hybrid system. Let $N_i$ be the number of neurons in hidden layer
$h_i$ and let $h_{ij}$ denote neuron $j$ in $h_i$, i.e.,
\begin{align}
  h_{ij}(x) = \sigma((w_{i}^j)^{\top}x + b_{i}^j),
\end{align}
where $(w_{i}^j)^{\top}$ is row $j$ of $W_i$ and $b_{i}^j$ is element
$j$ of $b_i$. Given $h_{ij}$, the corresponding proxy function
$g_{ij}$ is defined as follows:
\begin{align*}
g_{ij}(t, x) = \sigma(t\cdot ((w_{i}^j)^{\top}x + b_{i}^j)) = \frac{1}{1 + \text{exp}\{-t\cdot((w_{i}^j)^{\top}x + b_{i}^j)\}},
\end{align*}
where, once again, $g_{ij}(1,x) = h_{ij}(x)$. Note that, by the chain
rule,
\begin{align}
\label{eq:proxy_c}
\frac{\partial g_{ij}}{\partial t}(t,x) = \dot{g}_{ij}(t,x) =
((w_{i}^j)^{\top}x + b_{i}^j) g_{ij}(t,x) (1 - g_{ij}(t,x)).
\end{align}
Thus, for a given $x$, the value of hidden layer $h_i(x)$ can be
obtained by tracing all $g_{ij}(t,x)$ until $t=1$ (initialized at
$g_{ij}(0,x) = 0.5$). This suggests that each hidden layer can be
represented as a set of differential equations $\dot{g}_{ij}(t,x)$,
where $g_{ij}$ can be considered a state.

With the above intuition in mind, we now show how to transform the DNN
into an equivalent hybrid system.
%% Recall from Section~\ref{sec:problem} that the
%% safety property we are interested in verifying is $$\phi(X, Y) :=
%% (\psi_1(X) \wedge f(X) = Y) \Rightarrow \psi_2(Y),$$ where $X \in
%% \mathbb{R} ^N$ are the DNN inputs, $Y \in \mathbb{R}$ is the output,
%% and $\psi_1$ and $\psi_2$ are linear constraints.
To simplify notation, we assume $N = N_i$ for all $i \in \{1, \dots,
L-1\}$; we also assume the DNN has only one output. The proposed
approach can be extended to the more general case by adding more
states in the hybrid system.

The hybrid system has one mode for each DNN layer. To ensure the
hybrid system is equivalent to the DNN, in each mode we trace
$g_{ij}(t,x)$ until $t=1$ by using the differential equations
$\dot{g}_{ij}(t,x)$ in~\eqref{eq:proxy_c}. Thus, we use $N$ continuous
states, $x_1^P, \dots, x_N^P$, to represent the proxy variables for
each layer; when in mode $i$, each $x_j^P$, ${j \in \{1, \dots, N\}}$,
represents neuron $h_{ij}$ in the DNN. We also introduce $N$
additional continuous states (one per neuron), $x_1^L, \dots, x_N^L$,
to keep track of the linear functions within each neuron. The $x_i^L$
states are necessary because the inputs to each neuron are functions
of the $x_i^P$ states reached in the previous mode.

The hybrid system description is formalized in
Proposition~\ref{prop:dnn2hs}. The extra mode $q_0$ is used to reset
the $x_i^P$ states to 0.5 and the $x_i^J$ states to their
corresponding values in $q_1$. The two extra states, $t$ and $u$, are
used to store the ``time'' and the DNN's output, respectively. Note
that $\odot$ denotes Hadamard (element-wise) product.
\begin{proposition}
\label{prop:dnn2hs}
Let $h: \mathbb{R}^p \rightarrow \mathbb{R}^1$ be a sigmoid-based DNN
with $L-1$ hidden layers (with $N$ neurons each) and a linear last
layer with one output. The image under $h$ of a given set $I_y$ is
exactly the reachable set for $u$ in mode $q_L$ of the following
hybrid system:
\begin{itemize}

\item Continuous states: $x^P = [x_1^P, \dots, x_N^P]^{\top}, x^J =
  [x_1^J, \dots, x_N^J]^{\top}$, $u, t$;

\item Discrete states (modes): $q_0, q_1, \dots, q_L$;
  
\item Initial states: $x^P \in I_y$, $x^J = 0, u = 0, t = 0$;

\item Flow:

\begin{itemize}
\item $F(q_0) = [\dot{x}^P = 0, \dot{x}^J = 0, \dot{u} = 0, \dot{t} =
  1]$;

\item $F(q_i) = [\dot{x}^P = x^J\odot x^P\odot (1 - x^P), \dot{x}^J =
  0, \dot{u} = 0, \dot{t} = 1]$ for $i \in \{1, \dots, L-1\}$;

\item $F(q_L) = [\dot{x}^P = 0, \dot{x}^J = 0, \dot{u} = 0, \dot{t} =
  0]$;
\end{itemize}

\item Transitions: $E = \{(q_0, q_1), \dots, (q_{L-1}, q_L)\}$;

\item Invariants:

\begin{itemize}

\item $I(q_0) = \{t \le 0\}$;

\item $I(q_i) = \{t \le 1\}$ for $i \in \{1, \dots, L-1\}$;

\item $I(q_L) = \{t \le 0\}$;

\end{itemize}

\item Guards:

\begin{itemize}

\item $G(q_0, q_1) = \{t = 0\}$;

\item $G(q_i, q_{i+1}) = \{t = 1\}$ for $i \in \{1, \dots, L-1\}$;

\end{itemize}

\item Resets:

\begin{itemize}

\item $R(q_i, q_{i+1}) = \{x^P = 0.5, x^J = W^ix^P
  + b^{i}, t = 0\}\\$ for $i \in \{0, \dots, L-2\}$;

\item $R(q_{L-1}, q_L) = \{u = W^L x^P + b^L\}$.
  
\end{itemize}

\end{itemize}
\end{proposition}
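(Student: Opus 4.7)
The plan is to prove the proposition by induction on the layer index, showing that the hybrid system deterministically simulates the layer-by-layer evaluation of $h$ on any input $y \in I_y$. The key analytical tool is already established in Section~\ref{sec:sigmoid_de}: for any constant $c \in \mathbb{R}$, the scalar ODE $\dot{g} = c\, g(1-g)$ with initial condition $g(0) = 0.5$ has the unique solution $g(t) = 1/(1 + e^{-ct})$, so $g(1) = \sigma(c)$. I will use this together with the fact that within each mode $q_i$ the auxiliary state $x^J$ is frozen ($\dot{x}^J = 0$), so the $j$-th component of $x^P$ evolves according to an ODE with constant ``input'' $x^J_j$.

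First I would treat the base case. The initial state has $x^P \in I_y$ (chosen arbitrarily, call it $y$), $x^J = 0$, $u = 0$, $t = 0$. Mode $q_0$ is a trivial pass-through: its invariant $t \le 0$ combined with $\dot{t} = 1$ forces the guard $t = 0$ to be taken at once. Its reset sets $x^P := 0.5$, $x^J := W^1 y + b^1$ (using the pre-transition value of $x^P$, which is $y$), and $t := 0$. Thus upon entering $q_1$ the state has $x^J_j = (w_1^j)^\top y + b_1^j$ for each $j$, and $x^P_j = 0.5$. Using the ODE solution above componentwise, at the guard time $t = 1$ we have $x^P_j = \sigma((w_1^j)^\top y + b_1^j) = h_{1j}(y)$, i.e.\ $x^P = h_1(y)$.

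For the inductive step, I would assume that immediately after the reset on the edge $(q_{i-1}, q_i)$ we have $x^P = 0.5$ and $x^J = W^i h_{i-1}(\dots h_1(y)) + b^i$, so that $x^J_j$ is exactly the pre-activation input to neuron $j$ in layer $i$. Applying the same ODE argument componentwise, the value of $x^P$ at $t = 1$ is $h_i(h_{i-1}(\dots h_1(y)))$. The reset on edge $(q_i, q_{i+1})$ then stores this in the ``previous-$x^P$'' slot used to compute $x^J := W^{i+1} x^P + b^{i+1}$, while reinitialising $x^P$ to $0.5$ and $t$ to $0$, closing the induction. For $i = L-1$, the transition into $q_L$ instead executes $u := W^L x^P + b^L = h_L(h_{L-1}(\dots h_1(y))) = h(y)$, and in $q_L$ all flows are zero so $u$ stays fixed. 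This shows the forward inclusion $h(I_y) \subseteq \operatorname{Reach}_{q_L}(u)$ and, because the trajectory from any $y \in I_y$ is unique (the ODE has a smooth right-hand side and the mode sequence is forced by the guards/invariants), equality holds.

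The main obstacle I anticipate is bookkeeping rather than any deep analytical step: one must be careful about the pre-/post-transition semantics of the reset $x^J = W^i x^P + b^i$ (the right-hand side uses the value of $x^P$ \emph{just before} the reset, which by the inductive hypothesis equals $h_{i-1}(\dots h_1(y))$), and about the fact that $q_0$ is needed only to perform the initial assignment of $x^J$ from the sampled input sitting in $x^P$. A secondary point worth stating explicitly is that the trajectory is well-defined on $[0,1]$ in each intermediate mode: since $g(t) \in (0,1)$ for all finite $t$ whenever $g(0) \in (0,1)$, the quadratic right-hand side $x^J \odot x^P \odot (1 - x^P)$ is globally Lipschitz on the invariant region, so the guard $t = 1$ is always reached, ensuring the mode sequence $q_0, q_1, \dots, q_L$ is traversed exactly once.
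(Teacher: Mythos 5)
Your proposal is correct and follows essentially the same route as the paper's proof: an induction over the layers/modes using the fact established in Section~\ref{sec:sigmoid_de} that the scalar ODE $\dot{g} = c\,g(1-g)$ with $g(0)=0.5$ yields $g(1)=\sigma(c)$, so that $x^P$ at $t=1$ in mode $q_i$ equals $h_i\circ\dots\circ h_1(y)$ and the final reset gives $u=h(y)$. Your version simply makes explicit several details the paper leaves implicit (the pre-/post-reset semantics of $x^J$, the role of $q_0$, well-posedness of the flow on $[0,1]$, and determinism of the trajectory giving set equality rather than mere inclusion).
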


\begin{proof}
First note that the reachable set of $x^P$ in mode $q_1$ at time $t =
1$ is exactly the image of $I_y$ under $h_1$, the first hidden
layer. This is true because at $t = 1$, $x^P$ takes the value of the
sigmoid function. Applying this argument inductively, the reachable
set of $x^P$ in mode $q_{L-1}$ at time $t = 1$ is exactly the image of
$I_y$ under $h_{L-1}\circ \dots \circ h_1$. Finally, $u$ is a linear
function of $x^P$ with the same parameters as the last linear layer of
$h$. Thus, the reachable set for $u$ in mode $q_L$ is the image of
$I_y$ under $h_{L}\circ \dots \circ h_1 = h$.
\end{proof}

\subsection{Illustrative Example}
\label{sec:toy_example}
To illustrate the transformation process from a DNN to a hybrid
system, this subsection presents a small example, shown in
Figure~\ref{fig:toy}. The two-layer DNN is transformed into an
equivalent three-mode hybrid system. Since all the weights are
positive and the sigmoids are monotonically increasing, the maximum
value for the DNN's output $u$ is achieved at the maximum values of
the inputs, whereas the minimum value for $u$ is achieved at the
minimum values of the inputs, i.e., $u \ge 3\sigma(0.3\cdot2 +
0.2\cdot1 + 0.1) + 5\sigma(0.1\cdot2 + 0.5\cdot1 + 0.2)$ and $u \le
3\sigma(0.3\cdot3 + 0.2\cdot2 + 0.1) + 5\sigma(0.1\cdot3 + 0.5\cdot2 +
0.2)$. The same conclusion can be reached about state $u$ in the
hybrid system.

%!TEX root = main.tex

\subsection{Hybrid System Verification Tools}
\label{sec:tools}
%% Having described how to transform a DNN into a hybrid system, we now
%% discuss existing hybrid system verification approaches and tools that
%% can be used to solve the verification problem.
Depending on the hybrid system model and the desired precision, there
are multiple tools one might use. In the case of linear hybrid
systems, there are powerful tools that scale up to a few thousand
states~\cite{frehse11}. For non-linear systems, reachability is
undecidable in general, except for specific
subclasses~\cite{alur95,lafferriere99}. Despite this negative result,
multiple reachability methods have been developed that have proven
useful in specific scenarios. In particular, Flow*~\cite{chen13} works
by constructing flowpipe overapproximations of the dynamics in each
mode using Taylor Models; although Flow* provides no decidability
claims, it scales well in practical applications. Alternatively,
dReach~\cite{kong15} provides $\delta$-decidability guarantees for
dynamics described by Type 2 computable functions; at the same time,
dReach is not as scalable and could handle more than a few dozen
variables in the examples tried in this paper. Finally, one can also
use SMT solvers such as z3~\cite{moura08}; yet, SMT solvers are not
optimized for non-linear arithmetic and do not scale well either.

In this paper, we use Flow* due to its scalability; as shown in the
evaluation, it efficiently handles systems with a few hundred states,
i.e., DNNs with a few hundred neurons per layer. Furthermore, the
mildly non-linear nature of the sigmoid dynamics suggests that the
approximations used in Flow* are sufficiently precise so as to verify
interesting properties. This is illustrated in the case studies as
well as in the scalability evaluation in Section~\ref{sec:comparison}.
%% dReach was used because it provides $\delta$-reachability
%% guarantees, i.e., it can be used to verify reachability within any
%% desired precsion. While dReach is not as scalable as Flow*, it can
%% still handle a few dozen variables and can thus be useful for
%% small-size DNNs.

Finally, note that all existing tools have been developed for large
classes of hybrid systems and do not exploit the specific properties
of the sigmoid dynamics, e.g., they are monotonic and polynomial. For
example, in some cases it is possible to symbolically compute the
reachable set of monotone systems~\cite{coogan15}, although directly
applying this approach to our setting does not work due to the large
state space. Thus, developing a specialized sigmoid reachability tool
is bound to greatly improve scalability and precision; since this
paper is a proof of concept, developing such a tool is left for future
work.
%% \RI{Might be good to mention the importance of small weights
%%   somewhere but not sure where is a good place.}  \JW{I am not sure
%%   it adds anything to this paper -- that is an implementation
%%   detail that is used to speed up computation ... perhaps we should
%%   leave that insight for the tool paper.}

%!TEX root = main.tex

\section{Case Study Applications}
\label{sec:results}
This section presents two case studies in order to illustrate possible
use cases for the proposed verification approach. These case studies
were chosen in domains where DNNs are used extensively as controllers,
with weak worst-case guarantees about the trained network. This means
it is essential to verify properties about these closed-loop systems
in order to assure their functionality. The first case study,
presented in Section~\ref{sec:rl}, is Mountain Car, a benchmark
problem in RL. Section~\ref{sec:mpc_dnn} presents the second case
study in which a DNN is used to approximate an MPC with safety
guarantees.

\begin{figure}[!t]
  \centering 
  \includegraphics[width=0.7\linewidth]{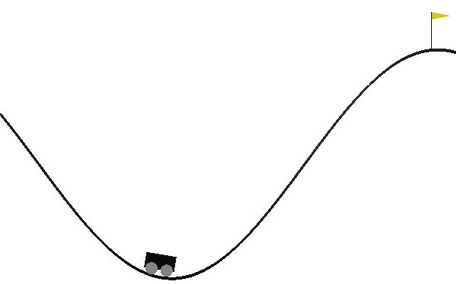}
  \caption{Mountain Car problem~\cite{openaigym}. The car needs to
    drive up the left hill first in order to gather enough momentum
    and reach its goal on the right.}
  \label{fig:Mcar}
\end{figure}

\subsection{Mountain Car: A Reinforcement Learning Case Study}
\label{sec:rl}
This subsection illustrates how Verisig could be used to verify
properties on a benchmark RL problem, namely Mountain Car (MC). In
(MC), an under-powered car must drive up a steep hill, as shown in
Figure~\ref{fig:Mcar}. Since the car does not have enough power to
simply accelerate up the hill, it needs to drive up the opposite hill
first in order to gather enough momentum to reach its goal. The
learning task is to learn a controller that takes as input the car's
position and velocity and outputs an acceleration command. The car has
the following discrete-time dynamics:
\begin{align*}
p_{k+1} &= p_k + v_{k}\\
v_{k+1} &= v_k + 0.0015u_k - 0.0025*cos(3p_k),
\end{align*}
where $u_k$ is the controller's input, and $p_k$ and $v_k$ are the
car's position and velocity, respectively, with $p_0$ chosen uniformly
at random from $[-0.6, -0.4]$ and $v_0 = 0$. Note that $v_k$ is
constrained to be within $[-0.07, 0.07]$ and $p_k$ is constrained to
be within $[-1.2, 0.6]$, thereby introducing (hybrid) mode switches
when these constraints are violated. We consider the continuous
version of the problem such that $u_k$ is a real number between -1 and
1.

During training, the learning algorithm tries different control
actions and observes a reward.  The reward associated with a control
action $u_k$ is $-0.1u_k^2$, i.e., larger control inputs are penalized
more so as to avoid a ``bang-bang'' strategy. A reward of 100 is
received when the car reaches its goal. The goal of the training
algorithm is to maximize the car's reward. The training stage
typically occurs over multiple episodes (if not solved, an episode is
terminated after 1000 steps) such that various behaviors can be
observed.  MC is considered ``solved'' if, during testing, the car
goes up the hill with an average reward of at least 90 over 100
consecutive trials.

Using Verisig, one can strengthen the definition of a ``solved'' task
and \emph{verify} that the car will go up the hill with a reward of at
least 90 starting from \emph{any} initial condition. To illustrate
this, we trained a DNN controller for MC in OpenAI
Gym~\cite{openaigym}, a toolkit for developing and comparing
algorithms on benchmark RL problems. We utilized a standard
actor/critic approach for deep RL problems~\cite{lillicrap15}. This is
a two-DNN setting in which one DNN (the critic) learns the reward
function, whereas the other one (the actor) learns the control. Once
training is finished, the actor is deployed as the DNN controller for
the closed-loop system. We trained a two-hidden-layer sigmoid-based
DNN with 16 neurons per layer; the last layer has a tanh activation
function in order to scale the output to be between -1 and 1. Note
that larger networks were also trained in order to evaluate
scalability, as discussed in Section~\ref{sec:comparison}.

To verify that the car will go up the hill with a reward of at least
90, we transform the DNN into an equivalent hybrid system using
Verisig and compose it with the car's hybrid system. We use
Verisig+Flow* to verify the desired property on the composed system,
given any initial position in [-0.6, -0.4]. Note that we split the
initial condition into subsets and verify the property for each subset
separately. This is necessary because the DNN takes very different
actions from different initial conditions, e.g., large negative inputs
when the car is started from the leftmost position and small negative
inputs for larger initial conditions. This variability introduces
uncertainty in the dynamics and causes large approximation errors.

\begin{table}
\centering
\begin{tabular}{|l|c|c|c|c|}
\hline
Initial condition   & Verified  & Reward & \# steps & Time \\ \hline
[-0.41, -0.40] & Yes & >= 90  & <= 100  & 1336s  \\ \hline
[-0.415, -0.41] & Yes & >= 90  & <= 100  & 1424s  \\ \hline
[-0.42, -0.415] & Yes & >= 90  & <= 100  & 812s  \\ \hline
[-0.43, -0.42] & Yes & >= 90  & <= 100  & 852s  \\ \hline
[-0.45, -0.43] & Yes & >= 90  & <= 100  & 886s  \\ \hline
[-0.48, -0.45] & Yes & >= 90  & <= 100  & 744s  \\ \hline
[-0.50, -0.48] & Yes & >= 90  & <= 100  & 465s  \\ \hline
[-0.53, -0.50] & Yes & >= 90  & <= 100  & 694s  \\ \hline
[-0.55, -0.53] & Yes & >= 90  & <= 100  & 670s  \\ \hline
[-0.57, -0.55] & Yes & >= 90  & <= 100  & 763s  \\ \hline
[-0.58, -0.57] & Yes & >= 90  & <= 109  & 793s  \\ \hline
[-0.59, -0.58] & Yes & >= 90  & <= 112  & 1307s  \\ \hline
[-0.6, -0.59] & No & N/A  & N/A  & N/A  \\ \hline
\end{tabular}
\caption{Verisig+Flow* verification times (in seconds) for different
  initial conditions of MC. The third column shows the verified lower
  bound of reward. The fourth column shows the verified upper bound of
  the number of dynamics steps.}
\label{tab:mc}
\end{table}

\begin{figure}[!t]
  \centering 
  \includegraphics[width=0.9\linewidth]{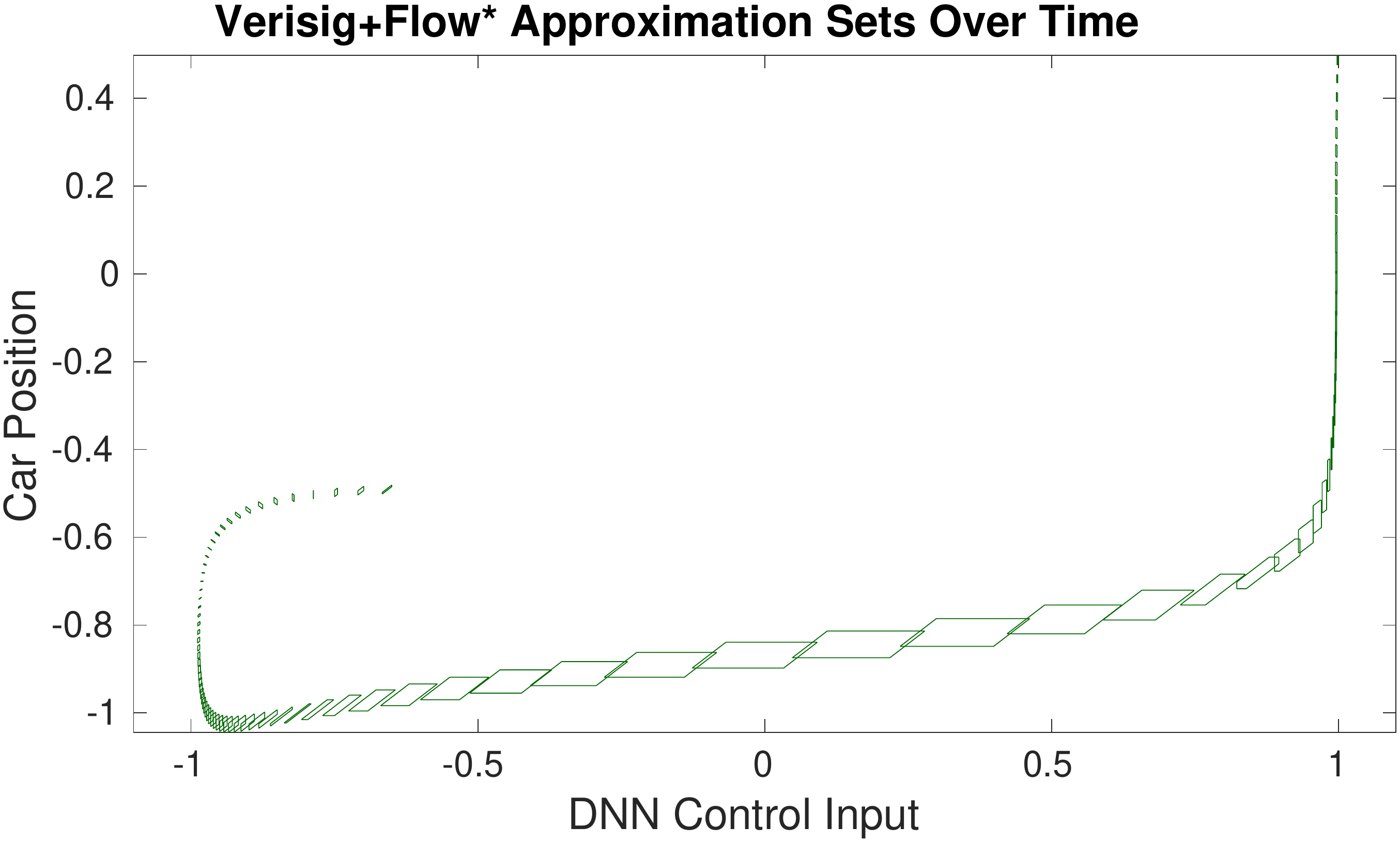}
  \caption{Verisig+Flow* approximation sets over time.}
  \label{fig:uncertainty}
\end{figure}

Table~\ref{tab:mc} presents the verification times for each
subset. Most properties are verified within 10-15 minutes; the
properties at either end of the initial set take longer to verify due
to branching in the car's hybrid system as caused by the car reaching
the minimum allowed position. For most initial conditions, we verify
that the car will go up the hill with a reward of at least 90 and in
at most 100 dynamics steps. Interestingly, after failing to verify the
property for the subset [-0.6, -0.59], we found a counter-example when
starting the car from $p_0 = -0.6$: the final reward was 88. This
suggests that Verisig is not only useful for verifying properties of
interest but it can also be used to identify areas for which these
properties do not hold. In the case of MC, this information can be
used to retrain the DNN by starting more episodes from [-0.6, -0.59]
since the likely reason the DNN does not perform well from that
initial set is that not many episodes were started from there during
training.

Finally, we illustrate the progression of the approximation sets
created by Flow*. Figure~\ref{fig:uncertainty} shows a two-dimensional
projection of the approximation sets over time (for the case $p_0 \in
[-0.5, -0.48]$), with the DNN control inputs plotted on the x-axis and
the car's position on the y-axis. Initially, the uncertainty is fairly
small and remains so until the car goes up the left hill and starts
going quickly downhill. At that point, the uncertainty increases but
it remains within the tolerance necessary to verify the desired
property.

\subsection{Using DNNs to Approximate MPCs with Safety Guarantees}
\label{sec:mpc_dnn}
To further evaluate the applicability of Verisig, we also consider a
case study in which a DNN is used to approximate an MPC with safety
guarantees. DNNs are used to approximate controllers for several
reasons: 1) the MPC computation is not feasible at
run-time~\cite{hertneck18}; 2) storing the original controller (e.g.,
as a lookup table) requires too much memory~\cite{julian16}; 3)
performing reachability analysis by discretizing the state space is
infeasible for high-dimensional systems ~\cite{royo18}. We focus on
the latter scenario in which the aim is to develop a DNN controller
with safety guarantees.

\begin{figure}[!t]
  \centering 
  \includegraphics[width=0.9\linewidth]{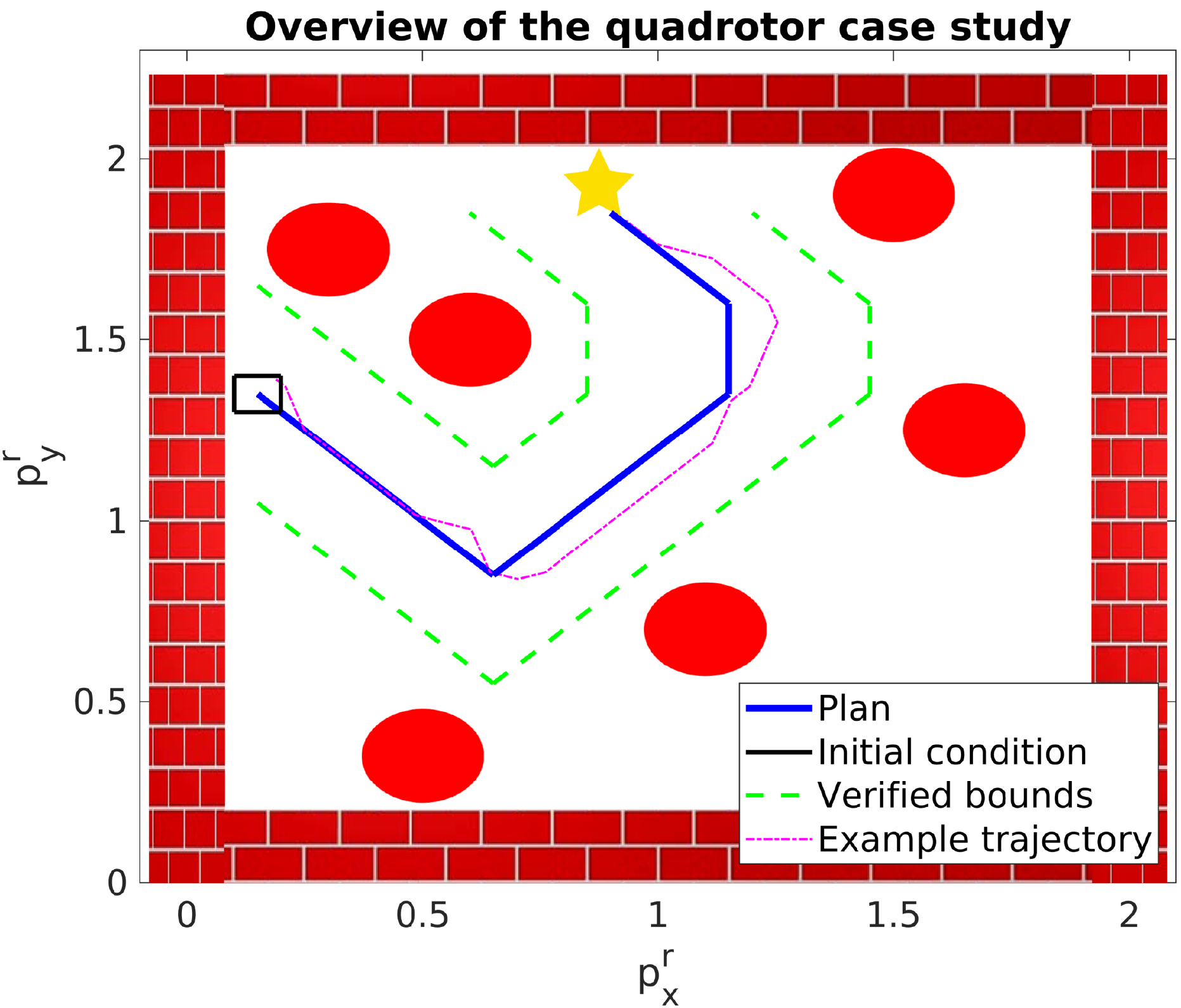}
  \caption{Overview of the quadrotor case study, as projected to the
    $(x,y)$-plane. The quadrotor's goal is to follow its planner in
    order to reach the goal at the top (star) without colliding into
    obstacles (red circles). We have verified that, starting from any
    state within the shown set of initial conditions, the quadrotor
    will not deviate from its plan by more than 0.32m (and hence will
    not collide into any obstacles).}
  \label{fig:quad_cs}
\end{figure}

As described in prior work~\cite{royo18}, it is possible to train a
DNN approximating an MPC in the case of control-affine systems whose
goal is to follow a piecewise-linear plan. In this case, the optimal
controller is ``bang-bang'', i.e., it is effectively a classifier
mapping a system state to one of finitely many control actions. Given
a trained DNN, one can simulate the closed-loop system over a horizon
$T$ with a worst-case (i.e., most difficult to follow) plan -- the
largest deviation from this plan (which also follows a ``bang-bang''
strategy) is a worst-case guarantee for the deviation from \emph{any}
other plan over horizon $T$. Thus, we obtain safety guarantees for the
system assuming that it is always started from the same initial
condition.

In this case study, we consider a six-dimensional control-affine model
for a quadrotor controlled by a DNN and verify that the quadrotor, as
started from a set of initial conditions, will reach its goal without
colliding into nearby obstacles. Specifically, the quadrotor follows a
path planner, given as a piecewise-linear system, and tries to stay as
close to the planner as possible. The setup, as projected to the
$(x,y)$-plane, is shown in Figure~\ref{fig:quad_cs}. The quadrotor and
planner dynamics models are as follows:
\begin{align}
\label{eq:quad_model}
\dot{q} := \left[\begin{array}{c}
\dot{p}_x^q\\
\dot{p}_y^q\\
\dot{p}_z^q\\
\dot{v}_x^q\\
\dot{v}_y^q\\
\dot{v}_z^q\end{array}\right] = \left[\begin{array}{c}
v_x^q\\
v_y^q\\
v_z^q\\
g\text{tan}\theta\\
-g\text{tan}\phi\\
\tau - g\end{array}\right], \dot{p} := \left[\begin{array}{c}
\dot{p}_x^p\\
\dot{p}_y^p\\
\dot{p}_z^p\\
\dot{v}_x^p\\
\dot{v}_y^p\\
\dot{v}_z^p\end{array}\right] = \left[\begin{array}{c}
b_x\\
b_y\\
b_z\\
0\\
0\\
0\end{array}\right],
\end{align}
where $p_x^q, p_y^q, p_z^q$ and $p_x^p, p_y^p, p_z^p$ are the
quadrotor and planner's positions, respectively; $v_x^q, v_y^q, v_z^q$
and $v_x^p, v_y^p, v_z^p$ are the quadrotor and planner's velocities,
respectively; $\theta$, $\phi$ and $\tau$ are control inputs (for
pitch, roll and thrust); $g = 9.81 m/s^2$ is gravity; $b_x, b_y, b_z$
are piecewise constant. The control inputs have constraints $\phi,
\theta \in [-0.1, 0.1]$ and $\tau \in [7.81, 11.81]$; the planner
velocities have constraints $b_x, b_y, b_z \in [-0.25, 0.25]$. The
controller's goal is to ensure the quadrotor is as close to the
planner as possible, i.e., stabilize the system of relative states $r
:= [p_x^r, p_y^r, p_z^r, v_x^r, v_y^r, v_z^r]^{\top} = q - p$.

\begin{table}
\centering
\begin{tabular}{|l|c|c|}
\hline
Initial condition on $(p_x^r, p_y^r)$  & Property & Time \\ \hline
$[-0.05, -0.025] \times [-0.05, -0.025]$ & $\|r_3\|_{\infty} \le 0.32m$ & 2766s  \\ \hline
$[-0.025, 0] \times [-0.05, -0.025]$ & $\|r_3\|_{\infty} \le 0.32m$ & 2136s  \\ \hline
$[0, 0.025] \times [-0.05, -0.025]$ & $\|r_3\|_{\infty} \le 0.32m$ & 2515s  \\ \hline
$[0.025, 0.05] \times [-0.05, -0.025]$ & $\|r_3\|_{\infty} \le 0.32m$ & 897s  \\ \hline
$[-0.05, -0.025] \times [-0.025, 0]$ & $\|r_3\|_{\infty} \le 0.32m$ & 1837s  \\ \hline
$[-0.025, 0] \times [-0.025, 0]$ & $\|r_3\|_{\infty} \le 0.32m$ & 1127s  \\ \hline
$[0, 0.025] \times [-0.025, 0]$ & $\|r_3\|_{\infty} \le 0.32m$ & 1593s  \\ \hline
$[0.025, 0.05] \times [-0.025, 0]$ & $\|r_3\|_{\infty} \le 0.32m$ & 894s  \\ \hline
$[-0.05, -0.025] \times [0, 0.025]$ & $\|r_3\|_{\infty} \le 0.32m$ & 1376s  \\ \hline
$[-0.025, 0] \times [0, 0.025]$ & $\|r_3\|_{\infty} \le 0.32m$ & 953s  \\ \hline
$[0, 0.025] \times [0, 0.025]$ & $\|r_3\|_{\infty} \le 0.32m$ & 1038s  \\ \hline
$[0.025, 0.05] \times [0, 0.025]$ & $\|r_3\|_{\infty} \le 0.32m$ & 647s  \\ \hline
$[-0.05, -0.025] \times [0.025, 0.05]$ & $\|r_3\|_{\infty} \le 0.32m$ & 3534s  \\ \hline
$[-0.025, 0] \times [0.025, 0.05]$ & $\|r_3\|_{\infty} \le 0.32m$ & 2491s  \\ \hline
$[0, 0.025] \times [0.025, 0.05]$ & $\|r_3\|_{\infty} \le 0.32m$ & 2142s  \\ \hline
$[0.025, 0.05] \times [0.025, 0.05]$ & $\|r_3\|_{\infty} \le 0.32m$ & 1090s  \\ \hline
\end{tabular}
\caption{Verisig+Flow* verification times (in seconds) for different
  initial conditions of the quadrotor case study. All properties were
  verified. Note that $r_3 = [p_x^r, p_y^r, p_z^r]$.}
\label{tab:quad_cs}
\vspace{-20pt}
\end{table}

\begin{figure*}[!th]
  \centering 
       \subfloat[16 neurons per layer.]{
           \includegraphics[width=0.24\linewidth]{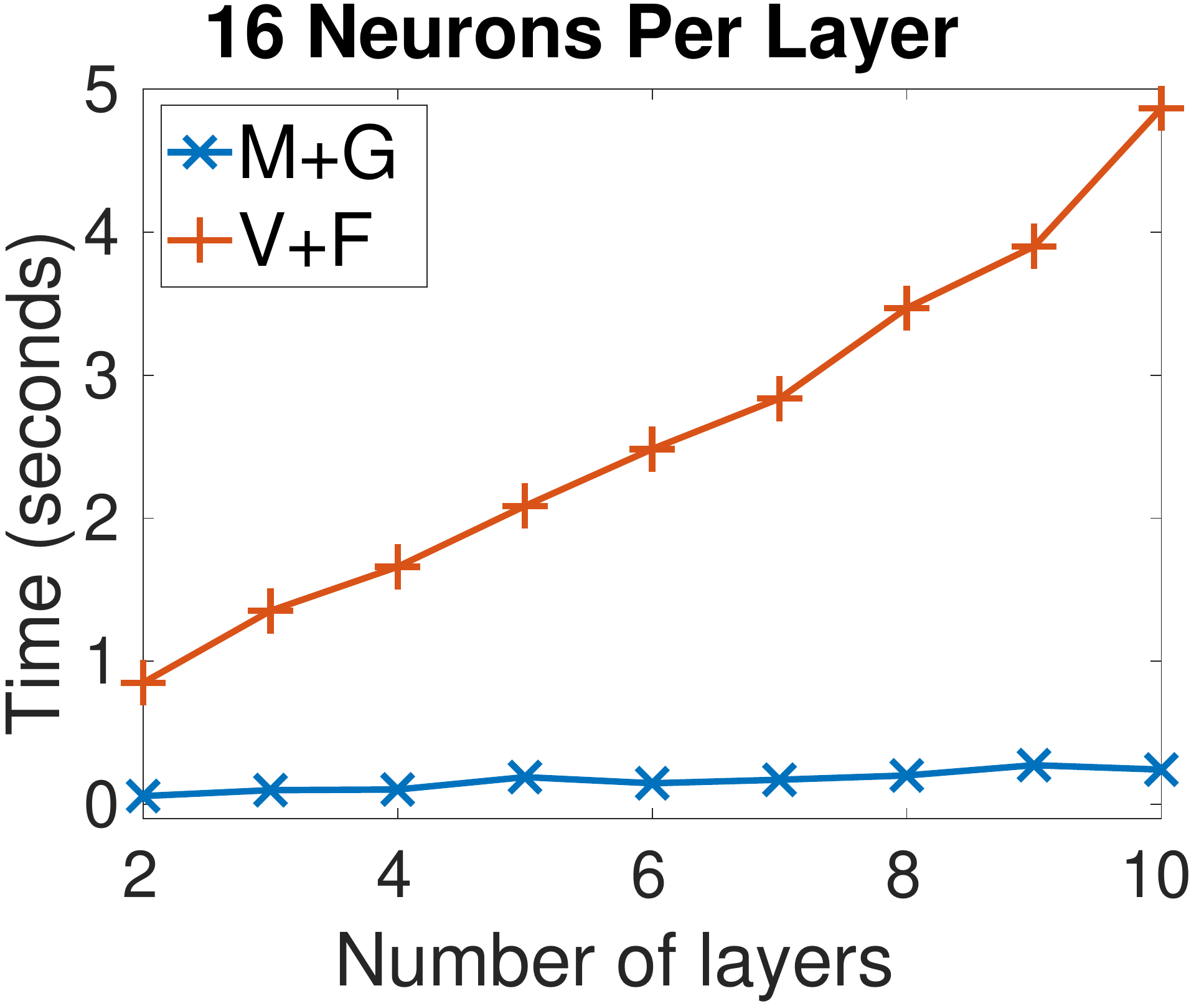}
           \label{fig:16_neurons}
       }
       \subfloat[32 neurons per layer.]{
           \includegraphics[width=0.24\linewidth]{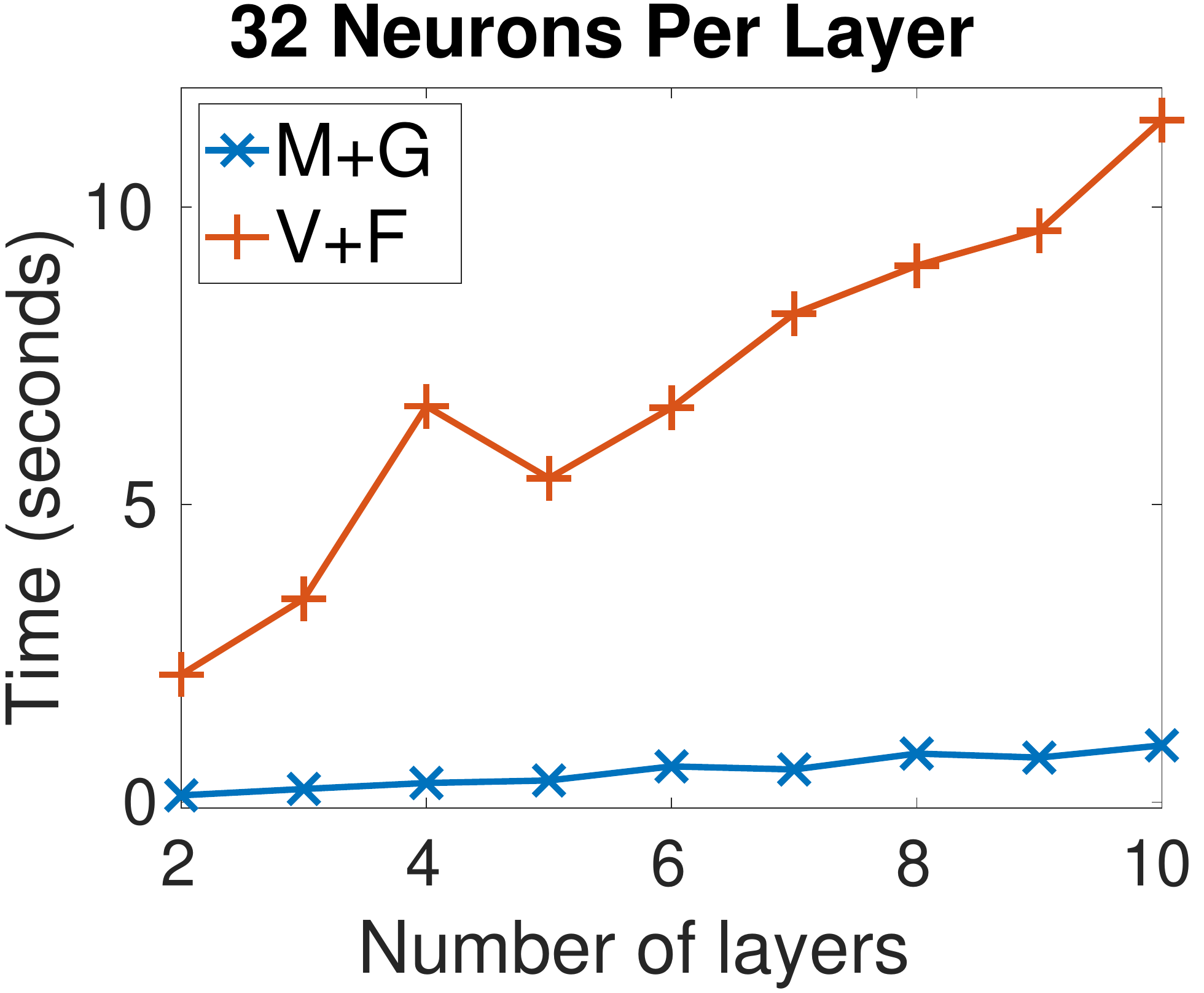}
           \label{fig:32_neurons}
       }
       \subfloat[64 neurons per layer.]{
           \includegraphics[width=0.25\linewidth]{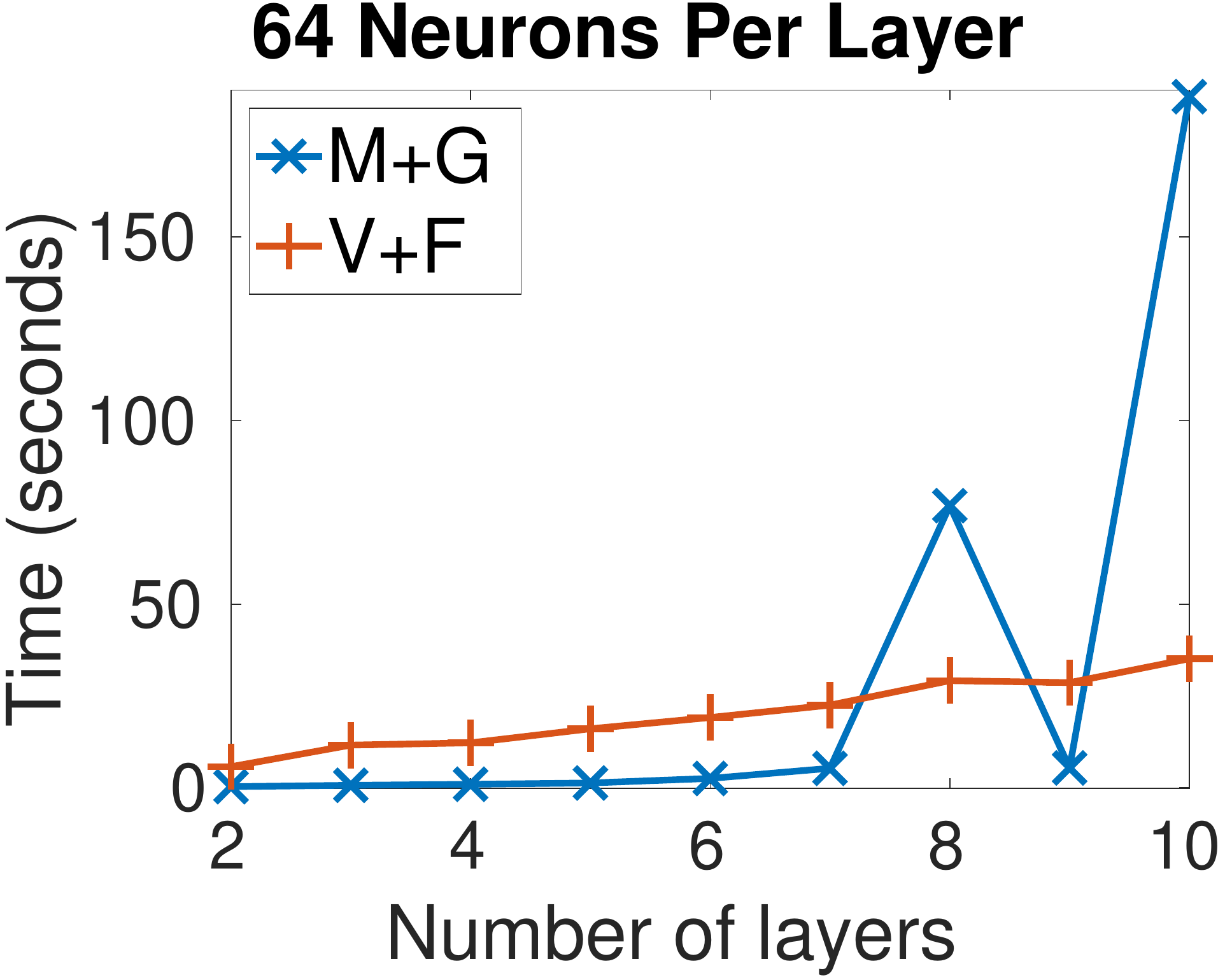}
           \label{fig:64_neurons}
       }
       \subfloat[128 neurons per layer.]{
           \includegraphics[width=0.24\linewidth]{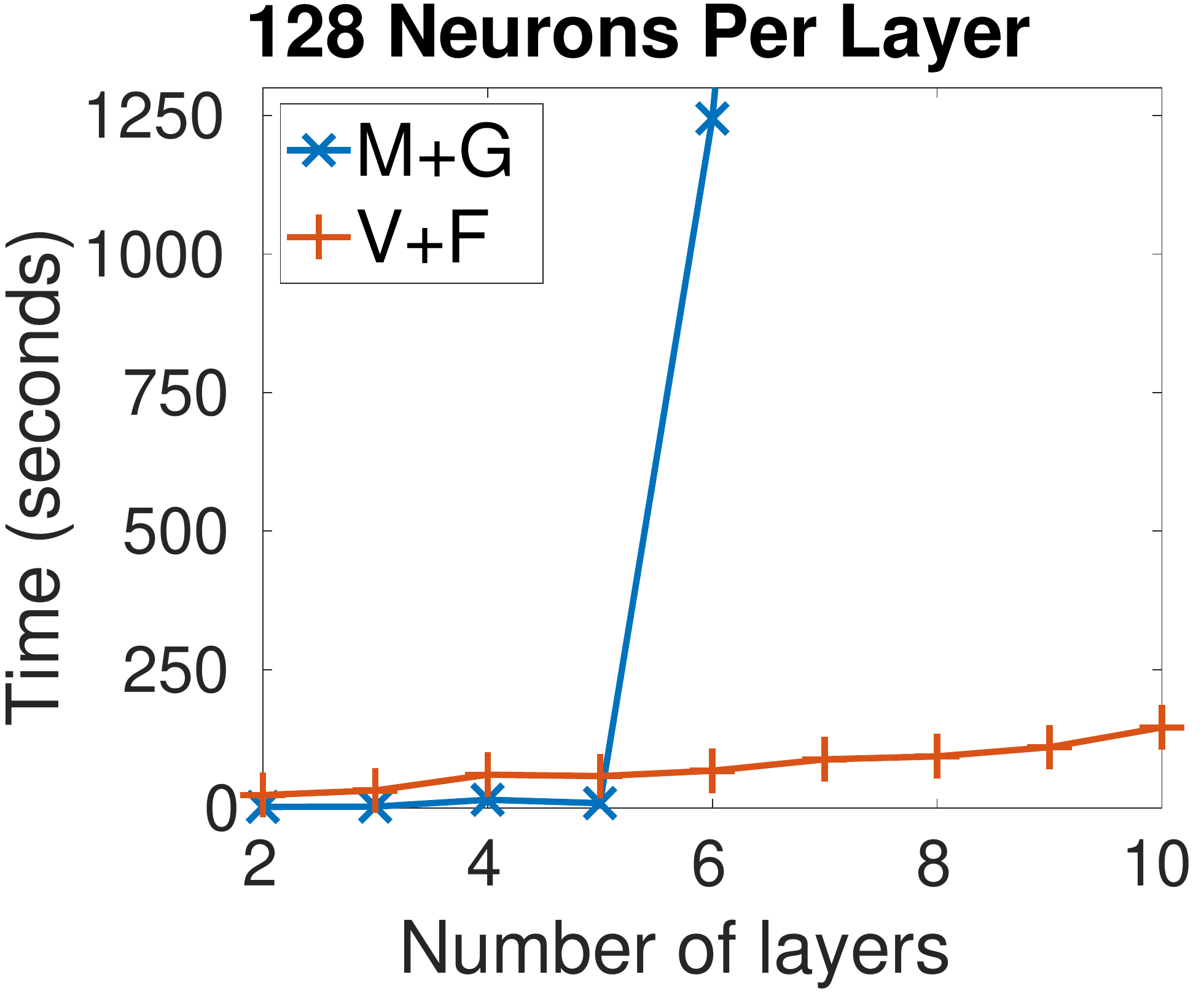}
           \label{fig:128_neurons}
       }       
  \caption{Comparison between the verification times of Verisig+Flow* (V+F) and the MILP-based approach with Gurobi (M+G) for DNNs of increasing size. In each figure, the number of neurons is fixed and number of layers varies from two to 10.}
  \label{fig:comparison}
\vspace{-10pt}
\end{figure*}

To train a DNN controller for the model in~\eqref{eq:quad_model}, we
follow the approach described in prior work~\cite{royo18}. We sample
multiple points from the state space over a horizon $T$ and train a
sequence of DNNs, one for each dynamics step (as discretized using the
Runge-Kutta method). Once two consecutive DNNs have similar training
error, we interrupt training and pick the last DNN as the final
controller. The DNN takes a relative state as input and outputs one of
eight possible actions (the ``bang-bang'' strategy implies there are
two options per control action). We trained a two-hidden layer
tanh-based DNN, with 20 neurons per layer and a linear last layer.

Given the trained DNN controller, we verify a safety property for the
the setup whose $(x,y)$-plane projection is shown in
Figure~\ref{fig:quad_cs}. Specifically, the quadrotor is started from
an initial condition $(p_x^r(0), p_y^r(0)) \in [-0.05, -0.05] \times
[-0.05, -0.05]$ (the other states are initialized at 0) and needs to
stay within 0.32m from the planner in order to reach its goal without
colliding into obstacles. Similar to the MC case study, we split the
initial condition into smaller subsets and verify the property for
each subset.

The verification times of Verisig+Flow* for each subset are shown in
Table~\ref{tab:quad_cs}. Most cases take less than 30 minutes to
verify, which is acceptable for an offline computation. Note that this
verification task is harder than MC not because of the larger
dimension of the state space but because of the discrete DNN
outputs. This means that Verisig+Flow* needs to enumerate and verify
all possible paths from the initial set. This process is
computationally expensive since the number of paths could grow
exponentially with the length of the scenario (set to 30 steps in this
case study). One approach to reduce the computation time would be to
use the Markov property of dynamical systems and skip states that have
been verified previously. We plan to explore this idea as part of
future work.

In summary, this section shows that Verisig can verify both safety and
liveness properties in different and challenging domains. The plant
models can be nonlinear systems specified in either discrete or
continuous time. The next section shows that Verisig+Flow* also scales
well to larger DNNs and is competitive with other approaches for
verification of DNN properties in isolation.

%!TEX root = main.tex

\section{Comparison with other DNN verification techniques}
\label{sec:comparison}

This section complements the Verisig evaluation in
Section~\ref{sec:results} by analyzing the scalability of the proposed
approach. We train DNNs of increasing size on the MC problem and
compare the verification times against the times produced by another
suggested approach to the verification of sigmoid-based DNNs, namely
one using a MILP formulation of the problem~\cite{dutta18}. We verify
properties about DNNs only (without considering the closed-loop
system), since existing approaches cannot be used to argue about the
closed-loop system.

As noted in the introduction, the two main classes of DNN verification
techniques that have been developed so far are SMT- and MILP-based
approaches to the verification of ReLU-based DNNs. Since both of these
techniques were developed for piecewise-linear activation functions,
neither of them can be directly applied to sigmoid-based DNNs. Yet, it
is possible to extend them to sigmoids by bounding the sigmoid from
above and below by piecewise-linear functions. In particular, we
implement the MILP-based approach for comparison purposes since it can
also be used to reason about the reachability of a DNN, similar to
Verisig+Flow*.

The encoding of each sigmoid-based neuron into an MILP problem is
described in detail in~\cite{dutta18}. It makes use of the so called
Big M method~\cite{vanderbei15}, where conservative upper and lower
bounds are derived for each neuron using interval analysis. The
encoding uses a binary variable for each linear piece of the
approximating function such that when that variable is equal to 1, the
inputs are within the bounds of that linear piece (all binary
variables have to sum up to 1 in order to enforce that the inputs are
within the bounds of exactly one linear piece). Thus, the MILP
contains as many binary variables per neuron as there are linear
pieces in the approximating function. Finally, one can use Gurobi to
solve the MILP and compute a reachable set of the outputs given
constraints on the inputs.

To compare the scalability of the two approaches, we trained multiple
DNNs on the MC problem by varying the number of layers from two to ten
and the number of neurons per layer from 16 to 128. A DNN is assumed
to be ``trained'' if most tested episodes result in a reward of at
least 90 -- since this is a scalability comparison only, no
closed-loop properties were verified. For each trained DNN, we record
the time to compute the reachable set of control actions for input
constraints $p_0 \in [-0.52, -0.5]$ and $v_0 = 0$ using both
Verisig+Flow* and the MILP-based approach. For fair comparison, the
two techniques were tuned to have similar approximation error; thus,
we used roughly 100 linear pieces to approximate the sigmoid.

The comparison is shown in Figure~\ref{fig:comparison}. The MILP-based
approach is faster for small networks and for large networks with few
layers. As the number of layers is increased, however, the MILP-based
approach's runtimes increase exponentially due to the increasing
number of binary variables in the MILP. Verisig+Flow*, on the other
hand, scales linearly with the number of layers since the same
computation is run for each layer (i.e., in each mode). This means
that Verisig+Flow* can verify properties about fairly deep networks;
this fact is noteworthy since deeper networks have been shown to learn
more efficiently than shallow ones~\cite{rolnick17,telgarsky16}.

Another interesting aspect of the behavior of the MILP-based approach
can be seen in Figure~\ref{fig:64_neurons}. The verification time for
the nine-layer DNN is much faster than for the eight-layer one,
probably due to Gurobi exploiting a corner case in that specific
MILP. This suggests that the fast verification times of the MILP-based
approach should be treated with caution as it is not known which
example can trigger a worst-case behavior. In conclusion,
Verisig+Flow* scales linearly and predictably with the number of
layers and can be used in a wide range of closed-loop systems with DNN
controllers.
%% \JW{You need to explain the anomaly in Figure 5c -- also, I think
%%   the figure would look better with time in a log-scale and putting
%%   all 8 results in the same figure ... that way we present the
%%   results in 1 figure, and that figure shows we win.  Right now I
%%   have to read the caption and scales to get the information that
%%   we won.  I think the take away is that we are slower on small
%%   ones (but not significantly) and we are a LOT faster on big
%%   ones.}

\section{Conclusion and Future Work}
\label{sec:discussion}
This paper presented Verisig, a hybrid system approach to verifying
safety properties of closed-loop systems using sigmoid-based DNNs as
controllers. We showed that the verification problem is decidable for
DNNs with one hidden layer and decidable for general DNNs if
Schanuel's conjecture is true. The proposed technique uses the fact
that the sigmoid is a solution to a quadratic differential equation,
which allows us to transform the DNN into an equivalent hybrid
system. Given this transformation, we cast the DNN verification
problem into a hybrid system verification problem, which can be solved
by existing reachability tools such as Flow*. We evaluated both the
applicability and scalability of Verisig+Flow* using two case studies,
one from reinforcement learning and one where the DNN was used to
approximate an MPC with safety guarantees.

The novelty of the proposed approach suggests multiple avenues for
future work. First of all, it would be interesting to investigate
whether one could use sigmoid-based DNNs to approximate DNNs with
other activation functions (with analytically bounded error). This
would enable us to verify properties about arbitrary DNNs and would
greatly expand the application domain of Verisig.

A second research direction is to exploit the specific properties of
the sigmoid dynamics, namely the fact that they are monotone and
quadratic, in order to speed up the verification computation. Although
the proposed technique is already scalable to a large class of
applications, it still makes use of Flow*, which is a general-purpose
tool that was developed for a large class of hybrid systems. That is
why, developing a specialized sigmoid verification tool might bring
significant benefits in terms of scalability and precision.

Finally, although the coarse approximation used in Flow* might be seen
as a limitation, no experiments we have run have shown large
approximation errors. This suggests that the Flow* approximation may
be well suited for sigmoid dynamics. Exploring this phenomenon further
and bounding the approximation error incurred by Flow* is also an
intriguing direction for future work.

%% The immediate avenue for future work is finding a real-data
%% application for the proposed approach. This will not only confirm the
%% usefulness of our technique but it will also provide evaluation on a
%% trained DNN as opposed to a randomly generated one. This is important
%% since the scalability of Verisig is affected by the magnitude of the
%% weights -- the reachability computation slows dramatically when the
%% sigmoids get very close to either 0 or 1 since higher approximation
%% precision is required. Thus, it is essential to confirm the simulation
%% scalability results using a real-data case study.

%% The two main directions for future work on the technical side are
%% improving the approximation performance and scalability. The latter
%% would require a more in-depth analysis of the approximation error
%% incurred by Flow* and understanding how it might be
%% approved. Scalability can be enhanced by exploiting the structure of
%% the sigmiod dynamics, e.g., the fact that it is piecewise monotonic;
%% such observations might help us optimize Flow* or develop a completely
%% new tool tailored specifically to sigmiods.

\section*{Acknowledgments}
We thank Xin Chen (University of Dayton, Ohio) for his help with
encoding the case studies in Flow*. We also thank Vicen\c c Rubies
Royo (University of California, Berkeley) for sharing and explaining
his code on approximating MPCs with DNNs. Last, but not least, we
thank Luan Nguyen and Oleg Sokolsky (University of Pennsylvania) for
fruitful discussions about the verification technique.

\bibliographystyle{ACM-Reference-Format}
\bibliography{bibFile}

\end{document}